\documentclass{easychair}
%\documentclass[twoside,11pt]{article}
% See also https://tex.stackexchange.com/questions/43279/fixing-jair-style
% Must be loaded before jair, theapa packages
\usepackage{hyperref}
%\usepackage{jair, theapa, rawfonts}

%\ShortHeadings{Certified Knowledge Compilation}{Bryant, Nawrocki, Avigad, \& Heule}

%\jairheading{1}{2024}{1--16}{10/22}{03/24}

%\ShortHeadings{Certified Knowledge Compilation}{Bryant, Nawrocki, Avigad, \& Heule}

\usepackage{amsmath}
\usepackage{amssymb}
\usepackage{amsthm}
\usepackage{tikz}
\usepackage{pgfplots}
\pgfplotsset{compat=1.18}
\usepackage{booktabs}

\bibliographystyle{abbrv}

\newcommand{\pand}{\mathbin{\land^\textsf{p}}}
\newcommand{\por}{\mathbin{\lor^\textsf{p}}}
\DeclareMathOperator*{\Pand}{\bigwedge^\textsf{p}}
\DeclareMathOperator*{\Por}{\bigvee^\textsf{p}}
\newcommand{\boolnot}{\neg}

\newcommand{\obar}[1]{\overline{#1}}

\newcommand{\lit}{\ell}

\newcommand{\varset}{X}
\newcommand{\exvarset}{Z}
\newcommand{\dependencyset}{{\cal D}}

\newcommand{\ring}{{\cal R}}
\newcommand{\semiring}{{\cal S}}
\newcommand{\dset}{{\cal A}}
\newcommand{\rep}{\textbf{R}}
\newcommand{\srep}{\textbf{S}}
\newcommand{\radd}{+}
\newcommand{\rmul}{\times}
\newcommand{\addident}{\textbf{0}}
\newcommand{\mulident}{\textbf{1}}
\newcommand{\imply}{\Rightarrow}
\newcommand{\ifandonlyif}{\Leftrightarrow}
\newcommand{\eequiv}{\ifandonlyif_{\varset}}
\newcommand{\drational}{\textbf{Q}_{2,5}}

\newcommand{\extend}[1]{#1^{*}}
\newcommand{\assign}{\alpha}
\newcommand{\eassign}{\extend{\alpha}}
\newcommand{\passign}{\rho}
\newcommand{\lassign}{\beta}
\newcommand{\uassign}{{\cal U}}
\newcommand{\modelset}{{\cal M}}

\newcommand{\indegree}{\textrm{indegree}}
\newcommand{\outdegree}{\textrm{outdegree}}
\newcommand{\validate}{\textsf{validate}}
\newcommand{\prov}{\textrm{Prov}}
\newcommand{\inputformula}{\phi_I}
\newcommand{\pogformula}{\theta_P}
\newcommand{\treesize}{T}

\newcommand{\makenode}[1]{\mathbf{#1}}
\newcommand{\nodeu}{\makenode{u}}
\newcommand{\nodev}{\makenode{v}}

\newcommand{\nodes}{\makenode{s}}
\newcommand{\nodep}{\makenode{p}}
\newcommand{\noder}{\makenode{r}}

\newcommand{\simplify}[2]{#1|_{#2}}

\newcommand{\progname}[1]{\textsc{#1}}
\newcommand{\dfour}{\progname{D4}}
\newcommand{\cdfour}{\progname{CD4}}

\newcommand{\cadical}{\progname{CaDiCal}}
\newcommand{\dtrim}{\progname{drat-trim}}
\newcommand{\ltrim}{\progname{lrat-trim}}

\newcommand{\mice}{MICE}
\newcommand{\Mice}{\mice}

\newcommand{\decdnnf}{decision-DNNF}
\newcommand{\detdnnf}{d-DNNF}

\definecolor{redorange}{rgb}{0.878431, 0.235294, 0.192157}
\definecolor{lightblue}{rgb}{0.552941, 0.72549, 0.792157}
\definecolor{clearyellow}{rgb}{0.964706, 0.745098, 0}
\definecolor{clearorange}{rgb}{0.917647, 0.462745, 0}
\definecolor{mildgray}{rgb}{0.54902, 0.509804, 0.47451}
\definecolor{softblue}{rgb}{0.643137, 0.858824, 0.909804}
\definecolor{bluegray}{rgb}{0.141176, 0.313725, 0.603922}
\definecolor{lightgreen}{rgb}{0.709804, 0.741176, 0}
\definecolor{darkgreen}{rgb}{0.152941, 0.576471, 0.172549}
\definecolor{redpurple}{rgb}{0.835294, 0, 0.196078}
\definecolor{midblue}{rgb}{0, 0.592157, 0.662745}
\definecolor{clearpurple}{rgb}{0.67451, 0.0784314, 0.352941}
\definecolor{browngreen}{rgb}{0.333333, 0.313725, 0.145098}
\definecolor{darkestpurple}{rgb}{0.396078, 0.113725, 0.196078}
\definecolor{greypurple}{rgb}{0.294118, 0.219608, 0.298039}
\definecolor{darkturquoise}{rgb}{0, 0.239216, 0.298039}
\definecolor{darkbrown}{rgb}{0.305882, 0.211765, 0.160784}
\definecolor{midgreen}{rgb}{0.560784, 0.6, 0.243137}
\definecolor{darkred}{rgb}{0.576471, 0.152941, 0.172549}
\definecolor{darkpurple}{rgb}{0.313725, 0.027451, 0.470588}
\definecolor{darkestblue}{rgb}{0, 0.156863, 0.333333}
\definecolor{lightpurple}{rgb}{0.776471, 0.690196, 0.737255}
\definecolor{softgreen}{rgb}{0.733333, 0.772549, 0.572549}
\definecolor{offwhite}{rgb}{0.839216, 0.823529, 0.768627}
\definecolor{medgreen}{rgb}{0.15, 0.6, 0.15}
\definecolor{midred}{rgb}{0.80,0.3,0.3}

% Lean code:
\usepackage{listings}
%\definecolor{keywordcolor}{rgb}{0.7, 0.1, 0.1}   % red
\definecolor{keywordcolor}{rgb}{0.0, 0.1, 0.6}   % blue
\definecolor{tacticcolor}{rgb}{0.0, 0.1, 0.6}    % blue
\definecolor{commentcolor}{rgb}{0.4, 0.4, 0.4}   % grey
\definecolor{symbolcolor}{rgb}{0.0, 0.1, 0.6}    % blue
\definecolor{sortcolor}{rgb}{0.1, 0.5, 0.1}      % green
\definecolor{attributecolor}{rgb}{0.7, 0.1, 0.1} % red

% set default language
\lstset{language=lean, xleftmargin=1em}
\lstset{backgroundcolor=\color{white}}

\newcommand{\rtext}[1]{\textcolor{midred}{\texttt{#1}}}

\newtheorem{dfn}{Definition}
\newtheorem{prop}{Proposition}
\newtheorem{thm}{Theorem}

\title{Certified Knowledge Compilation \\ with Application to Formally Verified Model Counting}

%% \author{\name Randal E. Bryant \email rebryant@cmu.edu \\
%%   \name Wojciech Nawrocki \email wjnawrocki@cmu.edu \\
%%   \name Jeremy Avigad \email avigad@cmu.edu \\
%%  \name Marijn J. H. Heule \email marijn@cmu.edu \\
%%  \addr Carnegie Mellon University \\ Pittsburgh, PA  USA}
\author{
  Randal E. Bryant, Wojciech Nawrocki, Jeremy Avigad, and Marijn J. H. Heule
}

\institute{
  Carnegie Mellon University\\
  Pittsburgh, Pennsylvania 15221, USA
}

\authorrunning{Bryant, Nawrocki, Avigad, and Heule}
\titlerunning{Certified Knowledge Compilation}

\newcommand{\lean}{Lean~4}

\begin{document}

\maketitle

\begin{abstract}

Computing many useful properties of Boolean formulas, such as their weighted or unweighted model count,
is intractable on general representations. It can become tractable when formulas are expressed in a
special form, such as the decision decomposable negation normal form (\decdnnf{})\@.
\emph{Knowledge compilation} is the process of converting a formula
into such a form. Unfortunately existing knowledge compilers provide no guarantee that their output correctly
represents the original formula, and therefore they cannot validate a model count, or any other computed value.

We present \emph{Partitioned-Operation Graphs} (POGs), a form that can
encode all
of the representations used by existing knowledge compilers.
We have designed  CPOG, a framework that can express proofs of equivalence between a
POG  and a Boolean formula in conjunctive normal form (CNF).

We have developed a program that generates POG representations from \decdnnf{}
graphs
produced by the state-of-the-art knowledge compiler
\dfour{}, as well as checkable CPOG proofs certifying that the output POGs
are equivalent to the input CNF formulas.  Our toolchain
for generating and verifying POGs scales to all but the largest
graphs produced by \dfour{} for formulas from a recent model counting
competition. Additionally, we have developed a formally verified CPOG
checker and model counter for POGs in the \lean{} proof assistant.
In doing so, we proved the soundness of our proof framework. These programs
comprise the first formally verified toolchain for weighted and unweighted
model counting.
\end{abstract}

\section{Introduction}

Given a Boolean formula, modern Boolean satisfiability (SAT) solvers can
find an assignment satisfying it or generate a proof that no
such assignment exists.  They have applications across a variety of
domains including computational mathematics, combinatorial
optimization, and the formal verification of hardware, software, and
security protocols.  Some applications, however, require going
beyond Boolean satisfiability.  For example, the \emph{model
  counting problem} requires computing the number of satisfying
assignments of a formula, including in cases where there are far too many
to enumerate individually.  Model counting has
applications in artificial intelligence, computer security, and
statistical sampling.  There are also many useful extensions of model counting,
including {\em
  weighted model counting}, where a weight is defined for
each possible assignment, and the goal becomes to compute the sum of the weights
of the satisfying assignments.

Model counting is a challenging problem---more challenging than the
already NP-hard Boolean satisfiability.  Several
tractable variants of Boolean satisfiability, including 2-SAT, become
intractable when the goal is to count models and not just determine
satisfiability \cite{valiant:siam:1979}.  Nonetheless, a number of
model counters that scale to very large formulas have been developed, as
witnessed by the progress in recent model counting competitions.

One approach to model counting, known as \emph{knowledge compilation},
transforms the formula into a structured form for which model counting
is straightforward.  For example, the \emph{deterministic decomposable negation normal form}
(\detdnnf{}) introduced by
Darwiche~\cite{darwiche:jacm:2001}, as well as the more restricted
\emph{decision decomposable negation normal form} (\decdnnf{})~\cite{huang:jair:2007,beame:uai:2013}
represent a
Boolean formula as a directed acyclic graph, with terminal nodes
labeled by Boolean variables and their complements, and with each
nonterminal node labeled by a Boolean \textsc{and} or \textsc{or} operation.  Restrictions
are placed on the structure of the graph (described in Section~\ref{sect:pog}) such that a count of the
models can be computed by a single bottom-up traversal
Kimmig, et al.~\cite{kimmig:jal:2017}
present a very general {\em
  algebraic model counting} framework describing
properties of Boolean functions that can be efficiently computed from
a \detdnnf{} representation.  These include unweighted and weighted model
counting, and much more.

One shortcoming of existing knowledge compilers is that they have no
generally accepted
way to validate that
the compiled representation is logically equivalent to the original
formula.  By contrast, all modern SAT solvers can generate
checkable proofs when they encounter unsatisfiable formulas.  The
guarantee provided by a checkable certificate of correctness enables
users of SAT solvers to fully trust their results.  Experience has also
shown that being able to generate proofs allow SAT solver developers to quickly
detect and diagnose bugs in their programs. This, in turn, has led
to more reliable SAT solvers.

This paper introduces \emph{Partitioned-Operation Graphs} (POGs),
a form that can encode all of the representations produced by current knowledge
compilers. The CPOG (for ``certified'' POG) file format then
captures both the structure of a POG
and a checkable proof of its logical equivalence to a Boolean formula in
conjunctive normal form (CNF).  A CPOG
proof consists of a sequence of clause addition and deletion steps,
based on an extended resolution proof system~\cite{Tseitin:1983}.
We establish a set of conditions that, when satisified by a CPOG file, guarantees that it
encodes a well-formed POG and provides a valid equivalence proof.

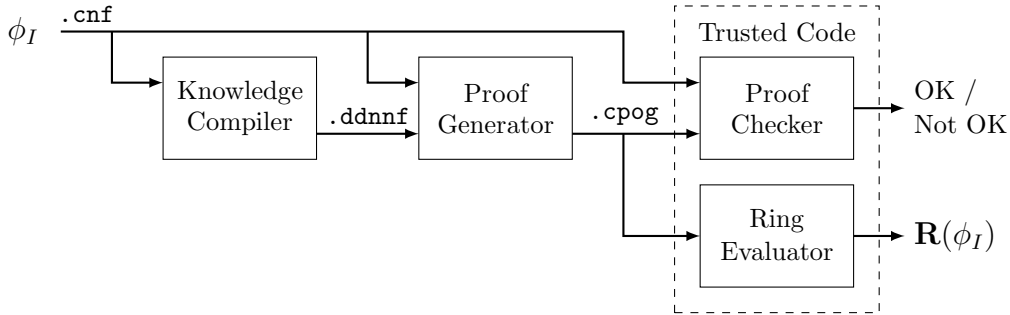
\begin{figure}
\centering{
\begin{tikzpicture}[scale=0.17]
  \draw (8,10) rectangle (20,18);
  \node at (14,15.2) {Knowledge};
  \node at (14,12.8) {Compiler};

  \draw (28,10) rectangle (40,18);
  \node at (34,15.2) {Proof};
  \node at (34,12.8) {Generator};

  \draw (50,10) rectangle (62,18);
  \node at (56,15.2) {Proof};
  \node at (56,12.8) {Checker};

  \draw (50,0) rectangle (62,8);
  \node at (56,5.2) {Ring};
  \node at (56,2.8) {Evaluator};

  \draw[dashed] (48,-2) rectangle (64,22);
  \node at (56,20) {Trusted Code};

  %% CNF
  \draw[thick] (0,20) -- (44,20) -- (44,16) [-latex] -- (50,16);
  \draw[thick] (4,20) -- (4,16) [-latex] -- (8,16);
  \draw[thick] (24,20) -- (24,16) [-latex] -- (28,16);
  \node [left] at (-1,20) {\large {$\inputformula$}};
  \node [above] at (2,20) {\texttt{.cnf}};

  %% NNF
  \draw[thick] (20,12) [-latex] -- (28,12);
  \node [above] at (24,12) {\texttt{.ddnnf}};

  %% CPOG
  \draw[thick] (40,12) [-latex] -- (50,12);
  \draw[thick] (44,12) -- (44,4) [-latex] -- (50,4);
  \node [above] at (44,12) {\texttt{.cpog}};

  %% OK/Not
  \draw[thick] (62,14) [-latex] -- (66,14) ;
  \node [right] at (66,15.2) { OK /} ;
  \node [right] at (66,12.8) { Not OK} ;
  \draw[thick] (62,4) [-latex] -- (66,4) ;
  \node [right] at (66,4) {\large {$\textbf{R}(\inputformula)$}};

\end{tikzpicture}
}
\caption{Certifying toolchain.
  The ring evaluator produces a weighted or unweighted count.
  Certification by the proof checker guarantees its correctness.}
\label{fig:chain}
\end{figure}

Figure~\ref{fig:chain} illustrates our certifying knowledge
compilation and model counting toolchain.  Starting with input formula
$\inputformula$, the \dfour{} knowledge
compiler~\cite{lagniez:ijcai:2017} generates a \decdnnf{}
representation, and the \emph{proof generator} uses this to generate a
CPOG file.  The \emph{proof checker} verifies the equivalence of the
CNF and CPOG representations.  The \emph{ring evaluator} computes an
unweighted or weighted model count from the POG\@ representation.  As
the dashed box in Figure~\ref{fig:chain} indicates, this toolchain
moves the root of trust away from the complex and highly optimized
knowledge compiler to a relatively simple checker and evaluator.
Importantly, the proof generator need not be trusted---its errors will
be caught by the proof checker.  

To ensure soundness of the abstract CPOG proof system, as well as
correctness of its concrete implementation, we formally verified the
proof system as well as versions of the proof checker and ring
evaluator in the \lean{} proof assistant~\cite{demoura:cade:2021}.
Running these two programs on a  CPOG file gives strong
assurance that the proof and the model count are correct. Our
experience with developing a formally verified proof checker has shown
that, even within the well-understood framework of extended
resolution, it can be challenging to formulate a full set of
requirements that guarantee soundness.  In fact, as described in Section~\ref{sect:lean:subtle-condition}, our efforts to
formally verify our proof framework exposed subtle conditions that we had
to impose on our partitioned sum rule.  

We evaluate our toolchain using benchmark formulas from the 2022
unweighted and weighted model competitions.  Our tools handle all but
the largest graphs generated by \dfour{}.  We evaluate
the benefits of several optimizations, finding that the use of lemmas
to exploit the sharing of subgraphs in the \decdnnf{} representation can be critical to avoid an expansion of the graph into a tree.
We measure
the relative
performance of the verified checker with one designed for high
performance and capacity, finding that the time to run the verified checker remains within a factor of $4\times$ that of the high capacity checker for most benchmarks, and that it has similar scaling properties.
We also show that our tools can provide end-to-end verification of
formulas that have been transformed by an equivalence-preserving
preprocessor.  That is, verification is based on the original formula,
and so proof checking certifies correct operation of the preprocessor,
the knowledge compiler, and the proof generator.

Our current tool can only handle the representations
generated by the \dfour{} knowledge compiler, and it only supports a
subclass of the Boolean function properties enabled by algebraic model
counting~\cite{kimmig:jal:2017}.  Both of these shortcomings can be overcome by modest extensions, as is discussed in Section~\ref{sect:extensions}.

This paper is an extended version of one published at the 2023
Conference on the Theory and Application of Boolean
Satisfiability~\cite{bryant:sat:2023}.  It provides much greater detail about
the algorithms, the formal verification, and the experimental results.

\section{Related Work}
\label{sect:related}

Generating proofs of unsatisfiability in SAT solvers has a long
tradition~\cite{ZhangMalik} and has become widely accepted due to the
formulation of clausal proof systems for which proofs can readily be
generated and efficiently checked
\cite{heule:cade:2013,wetzler14_drattrim}.
A number of formally verified checkers have been developed within different verification frameworks~\cite{cruz-cade-2017,lrat,Lammich:20,Tan:2021}.
The associated proofs add clauses while preserving satisfiability until the empty clause is derived.
Our work builds on the well-established technology and tools associated with clausal proof systems,
but we require features not found in proofs of unsatisfiability.
In particular, our checker constructs an entirely new representation of the input formula.  The proof must demonstrate
that the new representation
satisfies a set of rules, and that it
is logically equivalent to the input formula.  This requires verifying
additional proof steps, including clause deletion steps, and subtle
invariants, as described in Sections~\ref{sect:cpog} and
\ref{sect:formally-verified-toolchain}.

Capelli, et al.~\cite{capelli:sat:2019,capelli:aaai:2021} developed a knowledge compiler that
generates a certificate in a proof system that is itself based on
\decdnnf{}\@.  Their \cdfour{}
program, a modified version of \dfour{}, generates annotations to the
compiled representation, providing information about how the compiled
version relates to the input clauses.  It also generates a file of
clausal proof steps in the DRAT format~\cite{wetzler14_drattrim}.
Completing the certification involves running two different checkers
on the annotated \decdnnf{} graph and the DRAT file.  Although the
authors make informal arguments regarding the soundness of their
frameworks, these do not provide strong levels of assurance.  Indeed,
we have identified a weakness in their methodology due to an
invalid assumption about the guarantees provided by \dtrim{}, the
program it uses to check the DRAT file.  This weakness is
\emph{exploitable}: their framework can be ``spoofed'' into accepting
an incorrect compilation.

In more detail, \cdfour{} emits a sequence of clauses $R$ that
includes the conflict clauses that arose during a top-down processing of
the input clauses.  Given input formula $\phi_I$, their first task is to
check whether $\phi_I \imply R$, i.e., that any assignment that satisfies $\phi_I$ also satisfies each of the clauses in $R$.
They then base other parts of their proof on that
property and use a separate program to perform a series of additional
checks.  They use \dtrim{} to prove the implication, checking that each clause in $R$
satisifies the \emph{resolution asymmetric tautology} (RAT) property with
respect to the preceding
clauses~\cite{jarvisalo:ijcar:2012,heule:cade:2013}.  Adding a RAT
clause $C$ to a set of clauses maintains satisfiability, i.e., it will not cause a satisfiable formula to become unsatisfiable.
On the other hand,
it does not necessarily
preserve models, i.e., it can exclude some previous satisfying assignments.  As an example, consider the following formulas over the variables $x_1$, $x_2$, and $x_3$:
\begin{center}
  \begin{tabular}{lccc}
    $\phi_1$: & $(x_1 \lor x_3)$ & & \\
    $\phi_2$: & $(x_1 \lor x_3)$ & $\land$ & $(x_2 \lor \obar{x}_3)$\\
  \end{tabular}
\end{center}
Clearly, these two formulas are not equivalent---$\phi_1$ has six
models, while $\phi_2$ has four.  In particular, $\phi_1$ allows
arbitrary assignments to variable $x_2$, while $\phi_2$ does not.  Critically, however, the
second clause of $\phi_2$ is RAT with respect to the first clause (i.e., $\phi_1$)---any
satisfying assignment to $\phi_1$ can be transformed into one that
also satisfies $\phi_2$ by setting $x_2$ to 1, while keeping the values for other variables fixed.

This weakness would allow a buggy (or malicious) version of \cdfour{}
to spoof the checking framework.  Given formula $\phi_1$ as input, it
could produce a compiled result, including annotations, based on $\phi_2$ and also include the second clause of $\phi_2$ in $R$.
The check with \dtrim{} would pass, as would the other tests performed
by their checker.  We have confirmed this possibility with
their compiler and checker.\footnote{Downloaded May 18, 2023 as\\
\url{https://github.com/crillab/d4/tree/333370cc1e843dd0749c1efe88516e72b5239174}.}

This weakness can be corrected by restricting \dtrim{} to only allow adding
clauses that obey the stronger \emph{reverse unit propagation} (RUP) property~\cite{goldberg,vangelder08_verifying_rup_proofs}.  Adding a RUP clause $C$ to a set of clauses
does not change the set of satisfying assignments.
We have added a
command-line argument to \dtrim{} that enforces this
restriction.\footnote{Available at
\url{https://github.com/marijnheule/drat-trim/releases/tag/v05.22.2023}.}  This weakness, however, illustrates the general challenge of
developing a new proof framework.
As we can attest,
without engaging in an effort to formally verify the framework, there are likely to be
conditions that make the framework unsound.

Fichte, et al.~\cite{fichte:sat:2022} devised the \mice{}
proof framework for model counting programs.  Their proof rules are
based on the algorithms commonly used by model counters.  They
developed a program that can generate proof traces from \decdnnf{}
graphs and a program to check adherence to their proof rules.  This
framework is not directly comparable to ours, since it only certifies
the unweighted model count, but it has similar goals.
Again, they provide only  informal arguments
regarding the soundness of their framework.

Both of these prior certification frameworks are strongly tied to the
algorithms used by the knowledge compilers and model counters.  Some
of the conditions to be checked are relevant only to specific
implementations.    Our framework is very general and is based on a small set
of proof rules.  It builds on the highly developed
concepts of clausal proof systems.  These factors were important in enabling formal verification.
In Section~\ref{sect:experimental},
we also compare the performance of our toolchain to these other two.  We find that the $\cdfour{}$ toolchain generally
outperforms ours, while the \mice{} toolchain does not scale as well, especially for \decdnnf{} graphs with extensive sharing among the subgraphs.

\section{Logical Foundations}
\label{sect:logical:foundations}

  Let $\varset$ denote a set of Boolean variables, and let $\assign$
  be an \emph{assignment} of truth values to some subset of the
  variables, where $0$ denotes false and $1$ denotes true, i.e.,
  $\assign \colon \varset' \rightarrow \{0,1\}$ for some $\varset'
  \subseteq \varset$.  We say the assignment is \emph{total} when it
  assigns a value to every variable ($\varset' = \varset$), and that
  it is \emph{partial} otherwise.
  The set of all possible total assignments over
  $\varset$ is denoted $\uassign$.

For each variable $x \in \varset$,
  we define the \emph{literals} $x$ and $\obar{x}$, where $\obar{x}$ is the
  negation of $x$. An
  assignment $\assign$ can be viewed as a set of literals, where
  we write $\lit \in \assign$ when $\lit = x$ and $\assign(x) = 1$ or when
  $\lit = \obar{x}$ and $\assign(x) = 0$.  We write the negation of literal $\lit$ as $\obar{\lit}$.  That is, $\obar{\lit} = \obar{x}$ when $\lit = x$ and
$\obar{\lit} = x$ when $\lit = \obar{x}$.

\begin{dfn}
  The set of Boolean formulas is defined recursively.  Each
  formula $\phi$ has an associated \emph{dependency set}
  $\dependencyset(\phi)  \subseteq \varset$, and a set of \emph{models} $\modelset(\phi)$,
  consisting of total assignments that satisfy the formula:
  \begin{enumerate}
  \item Boolean constants $0$ and $1$ are Boolean formulas,
    with $\dependencyset(0) = \dependencyset(1) = \emptyset$, with $\modelset(0) = \emptyset$, and with $\modelset(1) = \uassign$.
  \item Variable $x$ is a Boolean formula, with $\dependencyset(x) = \{x\}$
    and $\modelset(x) = \{\assign \in \uassign | \assign(x)=1\}$.
  \item For formula $\phi$, its \emph{negation}, written $\boolnot \phi$ is a Boolean formula,
    with $\dependencyset(\boolnot \phi) = \dependencyset(\phi)$ and $\modelset(\boolnot \phi) = \uassign - \modelset(\phi)$.
  \item For formulas $\phi_1, \phi_2, \ldots, \phi_k$, their \emph{product} $\phi = \bigwedge_{1 \leq i \leq k} \phi_i$ is a Boolean formula, with
      $\dependencyset(\phi) = \bigcup_{1 \leq i \leq k} \dependencyset(\phi_i)$ and
      $\modelset(\phi) = \bigcap_{1 \leq i \leq k} \modelset(\phi_i)$.
  \item For formulas $\phi_1, \phi_2, \ldots, \phi_k$, their \emph{sum} $\phi = \bigvee_{1 \leq i \leq k} \phi_i$ is a Boolean formula, with
      $\dependencyset(\phi) = \bigcup_{1 \leq i \leq k} \dependencyset(\phi_i)$ and
      $\modelset(\phi) = \bigcup_{1 \leq i \leq k} \modelset(\phi_i)$.
  \end{enumerate}
\label{def:boolean}
\end{dfn}

  We highlight some special classes of Boolean formulas.  A formula is
  in \emph{negation normal form} (NNF) when negation is applied only to
  variables.  A formula is in \emph{conjunctive normal form} (CNF)
  when (i) it is in negation normal form, (ii) sum is applied only
  to literals, and (iii) there is a single product operation over all
  of the sums.  A CNF formula can be represented as a set of
  \emph{clauses}, each of which is a set of literals.  Each clause
  represents the sum of the literals, and the formula is the product
  of its clauses.  We use set notation to reference the clauses within
  a formula and the literals within a clause.  A clause consisting of
  a single literal is referred to as a \emph{unit} clause and the
  literal as a \emph{unit} literal.  This literal must be assigned
  value $1$ by any satisfying assignment of the formula.

\begin{dfn}
  \label{def:partitioned-operation-formula}
  A \emph{partitioned-operation formula}
 satisfies the following for all product and sum operations:
      \begin{enumerate}
      \item The arguments to each product must have disjoint dependency sets.  That is, operation
        $\bigwedge_{1 \leq i \leq k} \phi_i$ requires $\dependencyset(\phi_i) \cap \dependencyset(\phi_j) = \emptyset$ for $1 \leq i < j \leq k$.
      \item The arguments to each sum must have disjoint models.  That is, operation
        $\bigvee_{1 \leq i \leq k} \phi_i$ requires $\modelset(\phi_i) \cap \modelset(\phi_j) = \emptyset$ for $1 \leq i < j \leq k$.
      \end{enumerate}
\end{dfn}
     We let $\pand$ and $\por$ denote the product and sum operations
     in a partitioned-operation formula.  In the knowledge compilation
     literature, Boolean formulas where all product arguments have
     disjoint dependency sets are said to be
     \emph{decomposable}~\cite{darwiche:jacm:2001,darwiche:jair:2002}.  Those where all sum arguments have disjoint models are said to be
     \emph{deterministic}~\cite{darwiche:aaai:2002,darwiche:jair:2002}.

  \section{Ring Evaluation of a Boolean Formula}
\label{sect:ring}

We propose a general framework for summarizing properties of Boolean
formulas similar to the formulation of algebraic model counting by Kimmig, et al.~\cite{kimmig:jal:2017}.
Our formulation in terms of rings is more restrictive than their semiring-based approach.
We discuss the difference and how our
work could be generalized in Section~\ref{sect:extend:semiring}.

\begin{dfn}
  A \emph{commutative ring} $\ring$ is an algebraic structure
  $\langle \dset, \radd, \rmul, \addident, \mulident \rangle$,
  with elements in the set $\dset$ and with commutative and
  associative operations $\radd$ (addition) and $\rmul$ (multiplication),
  such that multiplication distributes
  over addition.  $\addident$ is the additive identity and $\mulident$ is
  the multiplicative identity.  Every element $a \in \dset$ has an
  \emph{additive inverse} $-a$ such that $a + -a = \addident$.
\label{def:ring}
\end{dfn}
We write $a - b$ as a shorthand for $a + -b$.

\begin{dfn}[Ring Evaluation Problem]
\label{def:ring_evaluation}
  For commutative ring $\ring$, a \emph{ring weight function} associates a value $w(x) \in \dset$ with
  every variable $x \in \varset$.  We then define $w(\obar{x}) \doteq \mulident-w(x)$.

  For Boolean formula $\phi$ and ring weight function $w$, the \emph{ring evaluation problem} computes
  \begin{equation}
    \begin{array}{rcl}
    \rep(\phi, w) & = & \sum_{\alpha \in \modelset(\phi)} \;\; \prod_{\lit \in \alpha} w(\ell) \label{eqn:rep}
    \end{array}
  \end{equation}
  In this equation, sum \scalebox{0.8}{$\sum$} is computed using addition operation $\radd$, and product \scalebox{0.8}{$\prod$} is computed using multiplication operation $\rmul$.
\label{def:weight}
\end{dfn}

Many important properties of Boolean formulas can be
expressed as ring evaluation problems.  The
(unweighted) \emph{model counting} problem for formula $\phi$ requires determining $|\modelset(\phi)|$.
It can be cast as a ring evaluation problem by having $\radd$ and
$\rmul$ be addition and multiplication over rational numbers and using
weight function $w(x) = 1/2$ for every variable $x$.
Ring evaluation of formula $\phi$ gives the \emph{density} of
the formula, i.e., the fraction of all possible total assignments that are
models.  For $n = |\varset|$, scaling the density by $2^n$
yields the number of models.

The \emph{weighted model counting}  problem is also defined over
rational numbers.  Some formulations  allow
independently assigning weights $W(x)$ and $W(\obar{x})$ for each variable $x$ and its complement, with the possibility that
$W(x) + W(\obar{x}) \not = 1$.
We can cast this as a
ring evaluation problem by letting $r(x) = W(x) + W(\obar{x})$,
performing ring evaluation with weight function $w(x) = W(x)/r(x)$ for each
variable $x$, and computing the weighted count
as $\rep(\phi, w)\; \rmul\; \prod_{x \in \varset} r(x)$.
Of course, this requires that $r(x) \not = 0$ for all $x \in \varset$.

The \emph{function hashing problem} provides a test
of inequivalence for Boolean formulas.  That is, for $n = |\varset|$, let $\ring$ be a
finite  field with $|\dset| = m$ such that $m \geq 2 n$.  For each $x \in \varset$, choose a value from $\dset$ at random for $w(x)$.  Two formulas
$\phi_1$ and $\phi_2$ will clearly have $\rep(\phi_1, w) = \rep(\phi_2, w)$
if they are logically equivalent,
and if $\rep(\phi_1, w) \not = \rep(\phi_2, w)$, then they are clearly inequivalent.
If they are not equivalent, then
the probability that $\rep(\phi_1, w) \not = \rep(\phi_2, w)$ will be at
least $\left(1-\frac{1}{m}\right)^n \geq \left(1-\frac{1}{2n}\right)^n > 1/2$.
Function hashing can therefore be used as part of a
randomized algorithm for equivalence testing~\cite{blum:ipl:1980}.
For example, it can compare different runs on a single formula,
either from different compilers or from a single compiler with different configuration parameters.

\section{Partitioned-Operation Graphs (POGs)}
\label{sect:pog}

Performing ring evaluation on an arbitrary Boolean formula could be intractable, but it is straightforward for a formula with partitioned operations:
\begin{prop}
\label{prop:ring:eval}
Ring evaluation with operations $\boolnot$, $\pand$, and $\por$ satisfies the following for any weight function $w$:
\begin{eqnarray}
\rep(\boolnot \phi,\; w) &=& \mulident - \rep(\phi, w) \label{eqn:ring:negation}\\
\textstyle
\scalebox{1.2}{\rep}\left(\Pand_{1 \leq i \leq k} \phi_i,\; w \right) &=& \prod_{1 \leq i \leq k} \rep(\phi_i, w) \label{eqn:ring:product}\\
\textstyle
\scalebox{1.2}{\rep}\left(\Por_{1 \leq i \leq k} \phi_i,\; w\right) &=& \sum_{1 \leq i \leq k} \rep(\phi_i, w) \label{eqn:ring:sum} 
\end{eqnarray}
\end{prop}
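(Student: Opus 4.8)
The plan is to derive all three identities directly from the defining sum $\rep(\phi,w) = \sum_{\assign \in \modelset(\phi)} \prod_{\lit \in \assign} w(\lit)$, exploiting the ring axioms of Definition~\ref{def:ring} (commutativity, associativity, distributivity, and additive inverses) together with the structural constraints of Definition~\ref{def:partitioned-operation-formula}. The workhorse will be a \emph{normalization lemma}: for any subset $S \subseteq \varset$, summing the weight-product over all assignments to $S$ yields $\mulident$. This follows from distributivity, which rewrites $\sum_{\assign \colon S \to \{0,1\}} \prod_{\lit \in \assign} w(\lit)$ as $\prod_{x \in S}(w(x) + w(\obar{x}))$; since $w(\obar{x}) = \mulident - w(x)$, every factor collapses to $\mulident$. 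I will also use the elementary fact, provable by structural induction on Definition~\ref{def:boolean}, that membership of a total assignment in $\modelset(\phi)$ depends only on its restriction to $\dependencyset(\phi)$.

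The negation and sum identities are then short. For Equation~\eqref{eqn:ring:negation}, since $\modelset(\boolnot\phi) = \uassign - \modelset(\phi)$ and $\modelset(\phi) \subseteq \uassign$, I split the sum over $\uassign$ into the models of $\phi$ and their complement; the total over $\uassign$ is $\mulident$ by normalization with $S = \varset$, and subtracting $\rep(\phi,w)$ (legitimate because the ring has additive inverses) gives the result. For Equation~\eqref{eqn:ring:sum}, the models of $\Por_{i} \phi_i$ form the \emph{disjoint} union $\bigcup_i \modelset(\phi_i)$, where disjointness is exactly the deterministic condition (2) of Definition~\ref{def:partitioned-operation-formula}; a sum over a disjoint union splits into the sum of the per-term sums, yielding $\sum_i \rep(\phi_i, w)$. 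The disjointness is essential, as overlapping models would be counted more than once.

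The product identity~\eqref{eqn:ring:product} is the main obstacle and requires the decomposability condition (1). Writing $D_i = \dependencyset(\phi_i)$, these sets are pairwise disjoint, so $\varset = D_1 \sqcup \cdots \sqcup D_k \sqcup U$ with $U = \varset \setminus \bigcup_i D_i$, and every total assignment factors uniquely as a tuple $(\assign_1, \ldots, \assign_k, \assign_U)$ of assignments to these blocks. By the dependency fact, $\assign \in \modelset(\Pand_i \phi_i) = \bigcap_i \modelset(\phi_i)$ holds precisely when each block $\assign_i$ satisfies $\phi_i$ on $D_i$, with $\assign_U$ unconstrained, and the corresponding weight-product factors as $\prod_i \bigl(\prod_{\lit \in \assign_i} w(\lit)\bigr) \cdot \prod_{\lit \in \assign_U} w(\lit)$.

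The crux is to turn this constrained sum of products into a product of sums. Here I invoke distributivity in its general form: because the index set of valid assignments is itself a product of independent choices (the constraints on distinct blocks are independent, and $\assign_U$ is free), the sum factors as $\bigl(\prod_{i} \sum_{\assign_i} \prod_{\lit \in \assign_i} w(\lit)\bigr) \cdot \bigl(\sum_{\assign_U} \prod_{\lit \in \assign_U} w(\lit)\bigr)$, where each inner $\assign_i$ ranges over the satisfying block-assignments of $\phi_i$. The $U$-factor is $\mulident$ by normalization. Finally, applying the same decomposition to each $\phi_i$ in isolation---splitting $\varset$ into $D_i$ and its complement and using normalization on the complement---shows that the $i$-th factor equals $\rep(\phi_i, w)$ exactly, so multiplying the factors gives $\prod_i \rep(\phi_i, w)$. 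I expect the careful justification of the product-index factorization, and the bookkeeping that identifies each block-sum with $\rep(\phi_i, w)$ rather than with some restricted quantity, to be the most delicate points; both ultimately rest on distributivity and on the normalization lemma absorbing the free variables.
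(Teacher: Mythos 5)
Your proof is correct, and it cleanly isolates which hypothesis each equation needs: additive inverses for \eqref{eqn:ring:negation} (exactly where the ring, rather than semiring, structure enters), determinism for \eqref{eqn:ring:sum}, and decomposability plus generalized distributivity for \eqref{eqn:ring:product}. For comparison: the paper gives no pen-and-paper argument for Proposition~\ref{prop:ring:eval} at all---it defers entirely to the Lean formalization, where the statement appears as \texttt{ringEval\_eq\_weightSum} and is proved by structural induction on the formula under the \texttt{partitioned} hypothesis. Your argument matches the mathematical content of that formal proof, with one difference of setup worth noting. You work with total assignments over all of $\varset$ and absorb unconstrained variables via your normalization lemma $\sum_{\assign\colon S\to\{0,1\}}\prod_{\lit\in\assign}w(\lit)=\prod_{x\in S}\bigl(w(x)+w(\obar{x})\bigr)=\mulident$, applied to the free block $U$ and to the complements $\varset\setminus D_i$ when identifying each block-sum with $\rep(\phi_i,w)$. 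The Lean development instead relativizes the weighted sum to an explicit finite variable set $s$ (\texttt{weightSum} over models with respect to $s$), so the same bookkeeping shows up there as lemmas transporting \texttt{weightSum} between variable sets; the trace of this is visible in the paper's \texttt{countModels}, whose conjunction case divides by a power of two to compensate for exactly the double-counted free variables that your $U$-factor handles multiplicatively. The two delicate points you flag---the factorization of the sum over a product of independent index sets, and the identification of each block-sum with $\rep(\phi_i,w)$ rather than a restricted quantity---are handled correctly, the first being a finite-sum instance of distributivity in a commutative ring and the second following from your dependency fact (provable by structural induction on Definition~\ref{def:boolean}) together with normalization on $\varset\setminus D_i$.
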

As is described in \ref{sect:formally-verified-toolchain}, we have proved these three equations using \lean{}.

A \emph{partitioned-operation graph} (POG) is a directed, acyclic
graph with nodes $N$ and edges $E \subseteq N \times N$.  We denote
nodes with boldface symbols, such as $\nodeu$ and $\nodev$.  When
$(\nodeu,\nodev) \in E$, node $\nodev$ is said to be a \emph{child} of
node $\nodeu$.  The in- and out-degrees of node $\nodeu$ are defined
as $\indegree(\nodeu) = | E \cap (N \times \{\nodeu\}) |$, and
$\outdegree(\nodeu) = | E \cap (\{\nodeu\} \times N) |$.  Node
$\nodeu$ is said to be \emph{terminal} if $\outdegree(\nodeu) = 0$.  A
terminal node is labeled by a Boolean constant or variable.  Node
$\nodeu$ is said to be \emph{nonterminal} if $\outdegree(\nodeu) > 0$.
A nonterminal node is labeled by Boolean operation $\pand$ or $\por$.
A node can be labeled with operation $\pand$ or $\por$ only if it
satisfies the partitioning restriction for that operation.  Every POG
has a designated \emph{root node} $\noder$.  Each edge has
a \emph{polarity}, indicating whether (negative polarity) or not
(positive polarity) the corresponding argument should be negated.

A POG represents a partitioned-operation
formula with a sharing of common subformulas.  Every node in the graph can be viewed as a partitioned-operation formula, and so we write
$\phi_{\nodeu}$ as the formula denoted by node $\nodeu$.
Each such formula has a set of models $\modelset(\phi_{\nodeu})$.

We can now define and compare two related representations:
\begin{itemize}
\item A \detdnnf{} graph can be viewed as a POG with negation applied only to variables.
\item A \decdnnf{} graph is a \detdnnf{} graph with the further
  restriction that any sum node $\nodeu$ has exactly two children
  $\nodeu_1$ and $\nodeu_0$, and it has an associated \emph{decision variable} $x$.
  For $b \in \{0,1\}$, node $\nodeu_b$ can be a terminal node with variable $x$, where the polarity of the edge from $\nodeu$ to $\nodeu_b$ is
  negative for $b=0$ and positive for $b=1$.
  Alternatively, $\nodeu_b$ can be a product node having either literal $\obar{x}$ ($b=0$), or literal $x$ ($b=1$) as one of its arguments.
  Either form implies that any total assignment $\assign \in \modelset(\phi_{\nodeu_b})$
  has $\assign(x)=b$, for $b \in \{0,1\}$.
\end{itemize}
The generalizations encompassed by POGs have also been referred to as \emph{deterministic decomposable circuits} (d-Ds)~\cite{monet:amw:2018}.
Our current proof generator only works for knowledge compilers
generating \decdnnf{} representations, but these generalizations
allow for future extensions, while maintaining the ability to
efficiently perform ring evaluation.  Extending the tool to handle arbitrary POGs is discussed in Section~\ref{sect:extend:pog}.

We define the \emph{size} of POG $P$, written $|P|$, to be the
the number of nonterminal nodes plus the number of edges from these nodes to their children.  Ring
evaluation of $P$ can be performed with at most $|P|$ ring
operations by traversing the graph from the terminal nodes up to
the root, computing a value $\rep(\phi_\nodeu, w)$ for each node $\nodeu$.
The final result is then $\rep(\phi_\noder, w)$.

\section{Clausal Proof Framework}
\label{sect:clausal}

A proof in our framework consists of a sequence of clause addition and
deletion steps, with each step preserving the set of solutions to the
original formula.  The state of the proof at any step is represented
as a set of \emph{active} clauses $\theta$, i.e., those that have been
added but not yet deleted.  Our framework is based on \emph{extended}
resolution~\cite{Tseitin:1983}, where proof steps can introduce new
\emph{extension variables} encoding Boolean formulas over input and
prior extension variables.  That is, each extension variable $z$ is
introduced via a set of \emph{defining clauses} that encode a formula
$z \ifandonlyif F$, where $F$ is a Boolean formula over a subset of the input
variables $\varset$ and previously defined extension variables $\exvarset$.
We write $\theta$ for
formulas encoded as clauses, possibly with extension variables, and
$\phi$ for formulas
that use no extension variables.

Let $\exvarset(\theta)$
denote the set of extension variables occurring in formula $\theta$.
For any total assignment $\assign$ to the variables in $\varset$, the defining
clauses induce a unique assignment $\eassign$ to the variables in $\varset
\cup \exvarset(\theta)$.  For Boolean formula $\phi$ over variables
$\varset$ and clausal
formula $\theta$ over the variables $\varset \cup \exvarset(\theta)$, we say that $\phi$ is {\em
  equivalent over $\varset$} to $\theta$, written
$\phi \eequiv \theta$,
when for any assignment
$\assign$ to the variables in $\varset$, assignment $\assign$ is a model of $\phi$ if
and only if its extension $\eassign$ is a model of $\theta$.
Starting with  $\theta$ equal to input formula $\inputformula$,
the proof must maintain the invariant that
$\inputformula \eequiv \theta$.

Clauses can be added in two different ways.  One is when they serve as
the defining clauses for an extension variable.  This form
occurs only when defining $\pand$ and $\por$ operations, as is
described in Section~\ref{sect:cpog}.  Clauses can also be added or
deleted based on \emph{implication redundancy}.  That is, when clause
$C$ satisfies $\theta \imply C$ for formula $\theta$, then it can either
be added to $\theta$ to create the formula $\theta \cup \{C\}$ or it can be deleted
from $\theta \cup \{C\}$ to create $\theta$.

We use \emph{reverse unit propagation} (RUP) to certify
implication redundancy when adding or deleting
clauses~\cite{goldberg,vangelder08_verifying_rup_proofs}.
RUP
is the core rule supported by standard
proof checkers~\cite{heule:cade:2013,wetzler14_drattrim} for propositional logic. It provides a simple and efficient
way to check a sequence of applications of the resolution proof rule~\cite{robinson-1965}.
Let $C = \{\lit_1, \lit_2, \ldots,\lit_p\}$ be a clause to be
proved redundant with respect to formula $\theta$.  Let $D_1, D_2, \ldots, D_k$ be a sequence of supporting
\emph{antecedent} clauses, such that each $D_i$ is in $\theta$.
A RUP step
proves that $\bigwedge_{1\leq i \leq k} D_i \imply C$ by showing
that the combination of the antecedents plus the negation of $C$ leads
to a contradiction.  The negation of $C$ is the formula
$\overline{\lit}_1 \land \overline{\lit}_2 \land \cdots \land
\overline{\lit}_p$, having a CNF representation consisting of $p$ unit
clauses of the form $\obar{\lit}_i$ for $1 \leq i \leq p$.  A RUP
check processes the clauses of the antecedent in sequence, inferring
additional unit clauses.  In processing clause $D_i$, if all but one
of the literals in the clause is the negation of one of the
accumulated unit clauses, then we can add this literal to the
accumulated set.  That is, all but this literal have been falsified,
and so it must be set to true for the clause to be satisfied.  The
final step with clause $D_k$ must cause a contradiction, i.e., all of
its literals are falsified by the accumulated unit clauses.

Compared to the proofs of unsatisfiability generated by SAT solvers,
ours have important differences.  Most
significantly, each proof step must preserve the set of solutions with respect to the input variables;
our proofs must therefore justify both clause deletions and additions.
By contrast, an unsatisfiability proof need only guarantee that
no proof step causes a satisfiable set of clauses to become
unsatisfiable, and therefore it need only justify clause additions.

\section{The CPOG Representation and Proof System}
\label{sect:cpog}

A CPOG file provides both a declaration of a POG, as well as a checkable
proof that a Boolean formula, given in conjunctive normal
form, is logically equivalent to the POG\@.
The proof format draws its inspiration from the LRAT~\cite{lrat} and
QRAT~\cite{heule:JAR2014} formats for unquantified and quantified Boolean formulas, respectively.
Key properties include:
\begin{itemize}
  \item
  The file contains declarations of $\pand$ and $\por$ operations to describe the POG.
  Declaring a node $\nodeu$ implicitly adds an extension variable $u$ and a set of defining clauses $\theta_{u}$
  encoding the product or sum operation.
  This is the only means for adding extension variables to the proof.
\item Boolean negation is supported implicitly by allowing the
  arguments of the $\por$ and $\pand$ operations to be literals and not just
  variables.
\item
  The file contains explicit clause addition steps.
  A clause can only be added if it is logically implied by the existing clauses.
  A sequence of clause identifiers must be listed as a \emph{hint} providing a RUP verification of the implication.
\item
  The file contains explicit clause deletion steps.
  A clause can only be deleted if it is logically implied by the remaining clauses.
  A sequence of clause identifiers must be listed as a \emph{hint} providing a RUP verification of the implication.
\item The checker must track the dependency set for every input and
  extension variable.  For each $\pand$ operation, the checker must ensure that the dependency sets for its arguments are disjoint.
  The associated extension variable has a dependency set equal to the union of those of its arguments.
\item Declaring a $\por$ operation requires a sequence of clauses
  providing a RUP proof that the arguments are mutually exclusive.
  Only binary $\por$ operations are allowed to avoid requiring multiple proofs of disjointness.
%  Generalizing to $k$ arguments would require
%  $k\,(k-1)/2$ proofs of disjointness.
\end{itemize}

\subsection{Syntax}
\label{subsection:syntax}

\begin{table}
  \caption{CPOG Step Types.  $C$: clause identifier, $L$: literal, $V$: variable}
  \label{tab:cpog:syntax}
\centering{
  \begin{tabular}{lllll}
    \toprule
    \multicolumn{4}{c}{Rule} & \multicolumn{1}{c}{Description} \\
    \midrule
    \makebox[5mm][l]{$C$} & \makebox[10mm][l]{\texttt{a}}  & \makebox[15mm][l]{$L^{*}$ \texttt{0}} & \makebox[15mm][l]{$C^{+}$ \texttt{0}}  & \makebox[20mm][l]{Add RUP clause} \\
     & \texttt{d} & $C$             & $C^{+}$  \texttt{0} & Delete RUP clause \\
    \midrule
    $C$    & \texttt{p} & $V \; L^{*}$ \texttt{0}    &                  & Declare $\pand$ operation \\
    $C$    & \texttt{s} & $V \; L \; L$    & $C^{+}$ \texttt{0}  & Declare $\por$ operation \\
    \midrule
     & \texttt{r} & $L$             &            & Declare root literal\\
    \bottomrule
  \end{tabular}
  }
\end{table}

Table~\ref{tab:cpog:syntax} shows the declarations that can occur in a CPOG file.
%% The checker is provided with the input formula as a separate file.
As with other clausal proof formats, a variable is
represented by a positive integer $v$, with the first ones being input
variables and successive ones being extension variables.  Literal $\lit$
is represented by a signed integer, with $-v$ being the logical negation of
variable $v$.  Each clause is indicated by a positive integer
identifier $C$, with the first ones being the IDs of the input clauses and successive
ones being the IDs of added clauses.  Clause identifiers must be defined in order,
with any clause identifier $C'$ given in the hint when adding clause $C$ having $C' < C$.

The first set of proof rules are similar to those in other clausal
proofs.
Clauses can be added via RUP addition
(command \texttt{a}), with a sequence of antecedent clauses (the
``hint'').
Similarly for clause deletion (command \texttt{d}).

\begin{table}
\caption{Defining Clauses for Product (A) and Sum (B) Operations}
\begin{minipage}{0.54\textwidth}
\begin{center}
\begin{tabular}{cccccc}
\multicolumn{6}{c}{(A) Product Operation $\pand$}\\
\toprule
\makebox[10mm]{ID} & \multicolumn{5}{c}{Clause} \\
\midrule
  $i$ & $v$ & $-\lit_1$ & $-\lit_2$ & $\cdots$ & $-\lit_k$\\
  $i\!+\!1$ & $-v$ & $\lit_1$  \\
  $i\!+\!2$ & $-v$ & $\lit_2$  \\
  & $\ldots$ \\
  $i\!+\!k$ & $-v$ & $\lit_k$  \\
\bottomrule
\end{tabular}
\end{center}
\end{minipage}
\begin{minipage}{0.44\textwidth}
\begin{center}
\begin{tabular}{cccc}
\multicolumn{4}{c}{(B) Sum Operation $\por$}\\
\toprule
\makebox[10mm]{ID} & \multicolumn{3}{c}{Clause} \\
\midrule
  $i$ & $-v$ & $\lit_1$ & $\lit_2$ \\
  $i\!+\!1$ & $v$ & $-\lit_1$ \\
  $i\!+\!2$ & $v$ & $-\lit_2$ \\
\bottomrule
$\;$ \\
$\;$ \\
\end{tabular}
\end{center}
\end{minipage}
\label{tab:defining}
\end{table}

The declaration of a \emph{product} operation, creating a node with operation $\pand$,
 has the form:
\begin{center}
\begin{tabular}{ccccccccc}
  \makebox[5mm]{$i$} & \makebox[5mm]{\texttt{p}} & \makebox[5mm]{$v$} & \makebox[5mm]{$\lit_1$} & \makebox[5mm]{$\lit_2$} &
  \makebox[5mm]{$\cdots$} & \makebox[5mm]{$\lit_k$} & \makebox[5mm]{\texttt{0}} \\
\end{tabular}
\end{center}
Integer $i$ is a new clause ID, $v$ is a positive integer that does not
correspond to any previous variable, and $\lit_1, \lit_2, \ldots, \lit_k$ is a sequence of $k$
integers, indicating the arguments as literals of existing variables.
As Table~\ref{tab:defining}(A) shows,
this declaration implicitly causes $k+1$ clauses to be added to the proof, providing a Tseitin encoding that defines extension variable $v$ as the product of its arguments.

The dependency sets for the arguments represented by each pair of
literals $\lit_i$
and $\lit_{j}$ must
be disjoint, for $1 \leq i < j \leq k$.  A product operation may have no arguments,
representing Boolean constant $1$.  The only clause added to the proof will be
the unit literal $v$.  A reference to literal $-v$ then provides a way
to represent constant $0$.

The declaration of a \emph{sum} operation, creating a node with operation $\por$, has the form:
\begin{center}
\begin{tabular}{ccccccc}
  \makebox[5mm]{$i$} & \makebox[5mm]{\texttt{s}} & \makebox[5mm]{$v$} & \makebox[5mm]{$\lit_1$} & \makebox[5mm]{$\lit_2$}
\makebox[5mm]{$H$} & \makebox[5mm]{$\texttt{0}$} \\
\end{tabular}
\end{center}
Integer $i$ is a new clause ID, $v$ is a positive integer that does
not correspond to any previous variable, and $\lit_1$ and $\lit_2$ are
signed integers, indicating the arguments as literals of existing variables.  Hint $H$
consists of a
sequence of clause IDs, all of which must be defining clauses for other POG operations.\footnote{The restriction to defining clauses in the hint is critical to soundness.
Allowing the hint to include the IDs of input clauses creates an exploitable weakness.  We discovered this weakness in the course of our efforts at formal verification.}
As Table~\ref{tab:defining}(B) shows,
this declaration implicitly causes three clauses to be added to the proof, providing a Tseitin encoding that defines extension variable $v$ as the sum of its arguments.
The hint must provide a RUP proof of the clause $\obar{\lit}_1 \lor \obar{\lit}_2$, showing that the two children of this node have disjoint models.

Finally, the literal denoting the root of the POG is declared with the
\texttt{r} command.  It can occur anywhere in the file.  Except in degenerate cases, it
will be the extension variable representing the root of a graph.
%, but in
%degenerate cases it can be the positive or negative literal of an
%input variable.

\subsection{Semantics}
\label{subsection:semantics}

As was described in Section~\ref{sect:clausal}, the defining clauses
for the product and sum operations uniquely define the values of their
extension variables for any assignment of values to the argument
variables.  That is, for assignment $\assign$ to the variables in
$\varset$, the defining clauses induce a unique assignment $\eassign$
to all data and extension variables.
Every POG node $\nodeu$ represents POG formula $\phi_{\nodeu}$ and has an associated extension variable $u$.
We can prove that
for any total assignment $\assign$ to the input variables, we will have
$\eassign(u) = 1$ if and only if $\assign \in \modelset(\phi_\nodeu)$.

%% A CPOG proof follows the same general form as a QRAT dual
%% proof~\cite{bryant:cade:2021}---one that ensures that each clause
%% addition and each clause deletion preserves equivalence.  With CPOG,
%% however, clauses are defined both explicitly and implicitly.  Starting
%% with the set of input clauses, the proof consists of a sequence of
%% steps that both add and delete clauses.  Each addition must be truth
%% preserving, that is, any satisfying total assignment for the set of clauses
%% before clause addition should still be a satisfying assignment afterwards.
%% Each deletion must be falsehood preserving.  That is, there can be no
%% new satisfying assignments when the clause is deleted.

The sequence of operator declarations, asserted clauses, and
clause deletions represents a systematic transformation of the input formula $\inputformula$
into a POG\@.  Validating all of these steps serves to prove that
POG $P$ is logically equivalent to the input formula.
At the completion of the proof, the following \textsc{final conditions} must hold:
\begin{enumerate}
\item There is exactly one remaining clause that was added via RUP
  addition, and this is a unit clause consisting of root literal $r$.
\item All of the input clauses have been deleted.
\end{enumerate}
In other words, at the end of the proof it must hold that the active
clauses be exactly those in $\pogformula \doteq \{\{r\}\} \cup \;
\bigcup_{\nodeu \in P} \theta_{u}$, the formula consisting of unit
clause $\{r\}$ and the defining clauses for the nodes, providing a
Tseitin encoding of $P$.  By our invariant, we are guaranteed that
$\inputformula \eequiv \pogformula$.  That is, for any total
assignment $\assign$ to the input variables,
$\assign$ is in $\modelset(\inputformula)$ if and only if
its unique extension $\eassign$ to the POG variables satisfies $\eassign(r) = 1$.

The sequence of clause addition steps provides a \emph{forward implication} proof that
$\assign \in \modelset(\inputformula) \imply \eassign(r) = 1$.  That is, any total
assignment $\assign$ satisfying the input formula must, when extended, also satisfy
the formula represented by the POG\@.
Conversely,
the sequence of clause deletion steps that delete all intermediate added clauses and all input clauses
provides a \emph{reverse implication} proof:
$\eassign(r) = 1 \imply \assign \in \modelset(\inputformula)$.
It does so by contradiction, proving that when $\eassign(r) = 0$, we must have $\assign \not \in \modelset(\inputformula)$.

\newcommand{\smallnl}{\\[-1pt]}

\subsection{CPOG Example}
\label{sect:cpog:example}

\begin{figure}
\vspace{-10pt}
\begin{minipage}{0.58\textwidth}
(A)  Input Formula\\[1.2ex]
\begin{tabular}{lll}
\toprule
\makebox[5mm]{ID} & \makebox[15mm]{Clauses} & \\
\midrule
\rtext{1} & \texttt{-1 3 -4} & \texttt{0} \smallnl
\rtext{2} & \texttt{-1 -3 4} & \texttt{0} \smallnl
\rtext{3} & \texttt{3 -4} & \texttt{0}\smallnl
\rtext{4} & \texttt{1 -3 4} & \texttt{0} \smallnl
\rtext{5} & \texttt{-1 -2} & \texttt{0} \\
\bottomrule
\end{tabular}
\\[1.8ex]
(C) POG Declaration\\[1.2ex]
\begin{tabular}{llll}
\toprule
\makebox[5mm]{ID} & \multicolumn{2}{l}{CPOG line} & Explanation \\
\midrule
\rtext{6} & \texttt{p 5 -3 -4} & \texttt{0} & $p_5 = \obar{x}_3 \pand \obar{x}_4$ \\
\rtext{9} & \texttt{p 6 3 4} & \texttt{0} & $p_6 = x_3 \pand x_4$ \\
\rtext{12} & \texttt{s 7 5 6} \; \rtext{7 10} & \texttt{0} & $s_7 = p_5 \por p_6$ \\
\rtext{15} & \texttt{p 8 -1 7} & \texttt{0} & $p_8 = \obar{x}_1 \pand s_7$ \\
\rtext{18} & \texttt{p 9 1 -2 7} & \texttt{0} & $p_9 = x_1 \pand \obar{x}_2 \pand s_7$ \\
\rtext{22} & \texttt{s 10 8 9} \; \rtext{16 19} & \texttt{0} & $s_{10} = p_8 \por p_9$ \\
 & \texttt{r 10} && Root $r = s_{10}$\\
\bottomrule
\end{tabular}
\end{minipage}
\begin{minipage}{0.35\textwidth}
(B) POG Representation \\
\begin{tikzpicture}
\definecolor{fillcolor}{RGB}{255,255,255}
\definecolor{highcolor}{RGB}{0,0,0}
\definecolor{lowcolor}{RGB}{0,0,0}
\definecolor{neutralcolor}{RGB}{0,0,0}
\definecolor{pathcolor}{RGB}{0,0,0}
\definecolor{background}{RGB}{225,225,225}
\draw (3.00,7.85) [thin,neutralcolor]  -- (3.00,6.53);
\draw (3.00,6.53) [thin,neutralcolor]  -- (2.12,4.77);
\draw (3.00,6.53) [thin,neutralcolor]  -- (3.88,4.77);
\draw (2.12,4.77) [thin,neutralcolor]  -- (3.00,3.00);
\draw (3.88,4.77) [thin,neutralcolor]  -- (3.00,3.00);
\draw (3.00,3.00) [thin,neutralcolor]  -- (2.12,1.24);
\draw (3.00,3.00) [thin,neutralcolor]  -- (3.88,1.24);
\draw (0.35,5.65) [thin,neutralcolor]  -- (3.00,5.65);
\draw (1.24,5.65) [thin,neutralcolor]  -- (2.12,4.77);
\draw (3.00,5.65) [thin,neutralcolor]  -- (3.88,4.77);
\draw (3.00,5.65) [thin,neutralcolor]  -- (3.00,5.65);
\draw [fill=neutralcolor,draw=neutralcolor] (3.00,5.65) circle [radius=0.09];
\draw (0.35,3.88) [thin,neutralcolor]  -- (1.24,3.88);
\draw (1.50,4.15) [thin,neutralcolor]  -- (1.50,4.15);
\draw [fill=neutralcolor,draw=neutralcolor] (1.50,4.15) circle [radius=0.09];
\draw (1.24,3.88) [thin,neutralcolor]  -- (2.12,4.77);
\draw (0.35,2.12) [thin,neutralcolor]  -- (3.00,2.12);
\draw (1.24,2.12) [thin,neutralcolor]  -- (2.12,1.24);
\draw (3.00,2.12) [thin,neutralcolor]  -- (3.88,1.24);
\draw (3.00,2.12) [thin,neutralcolor]  -- (3.00,2.12);
\draw [fill=neutralcolor,draw=neutralcolor] (3.00,2.12) circle [radius=0.09];
\draw (0.35,0.35) [thin,neutralcolor]  -- (3.00,0.35);
\draw (1.24,0.35) [thin,neutralcolor]  -- (2.12,1.24);
\draw (3.00,0.35) [thin,neutralcolor]  -- (3.88,1.24);
\draw (3.00,0.35) [thin,neutralcolor]  -- (3.00,0.35);
\draw [fill=neutralcolor,draw=neutralcolor] (3.00,0.35) circle [radius=0.09];
\draw [thin,fill=fillcolor,draw=fillcolor] (3.00,7.85) circle [radius=0.35];
\node at (3.00,7.85) {$\noder$};
\draw [thin,fill=fillcolor,draw=fillcolor] (0.35,5.65) circle [radius=0.35];
\node at (0.35,5.65) {$x_1$};
\draw [thin,fill=fillcolor,draw=fillcolor] (0.35,3.88) circle [radius=0.35];
\node at (0.35,3.88) {$x_2$};
\draw [thin,fill=fillcolor,draw=fillcolor] (0.35,2.12) circle [radius=0.35];
\node at (0.35,2.12) {$x_3$};
\draw [thin,fill=fillcolor,draw=fillcolor] (0.35,0.35) circle [radius=0.35];
\node at (0.35,0.35) {$x_4$};
\draw [thin,fill=fillcolor,draw=neutralcolor] (3.00,6.53) circle [radius=0.44];
\node at (3.00,6.53) {$\lor^{\textsf{p}}$};
\draw [thin,fill=fillcolor,draw=neutralcolor] (2.12,4.77) circle [radius=0.44];
\node at (2.12,4.77) {$\land^{\textsf{p}}$};
\draw [thin,fill=fillcolor,draw=neutralcolor] (3.88,4.77) circle [radius=0.44];
\node at (3.88,4.77) {$\land^{\textsf{p}}$};
\draw [thin,fill=fillcolor,draw=neutralcolor] (3.00,3.00) circle [radius=0.44];
\node at (3.00,3.00) {$\lor^{\textsf{p}}$};
\draw [thin,fill=fillcolor,draw=neutralcolor] (2.12,1.24) circle [radius=0.44];
\node at (2.12,1.24) {$\land^{\textsf{p}}$};
\draw [thin,fill=fillcolor,draw=neutralcolor] (3.88,1.24) circle [radius=0.44];
\node at (3.88,1.24) {$\land^{\textsf{p}}$};
\node at (1.24,5.65) {};
\node at (3.00,5.65) {};
\node at (3.00,5.65) {};
\node at (1.24,3.88) {};
\node at (1.50,4.15) {};
\node at (1.24,2.12) {};
\node at (3.00,2.12) {};
\node at (3.00,2.12) {};
\node at (1.24,0.35) {};
\node at (3.00,0.35) {};
\node at (3.00,0.35) {};
\node [right] at (3.44,6.53) {$\nodes_{10}$};
\node [right] at (2.56,4.77) {$\nodep_{9}$};
\node [right] at (4.32,4.77) {$\nodep_{8}$};
\node [right] at (3.44,3.00) {$\nodes_{7}$};
\node [right] at (2.56,1.24) {$\nodep_{6}$};
\node [right] at (4.32,1.24) {$\nodep_{5}$};
\end{tikzpicture}%%
\end{minipage}
%% Break
\\[2.5ex]
\begin{minipage}{0.42\textwidth}
(D) Defining Clauses\\[1.2ex]
\begin{tabular}{llll}
\toprule
\makebox[5mm]{ID} & \multicolumn{2}{l}{Clauses} & Explanation \\
\midrule
\rtext{6} & \texttt{5 3 4} & \texttt{0} & Define $p_5$ \smallnl
\rtext{7} & \texttt{-5 -3} & \texttt{0} & \smallnl
\rtext{8} & \texttt{-5 -4} & \texttt{0} & \\
\midrule
\rtext{9} & \texttt{6 -3 -4} & \texttt{0} & Define $p_6$ \smallnl
\rtext{10} & \texttt{-6 3} & \texttt{0} & \smallnl
\rtext{11} & \texttt{-6 4} & \texttt{0} & \\
\midrule
\rtext{12} & \texttt{-7 5 6} & \texttt{0} & Define $s_7$ \smallnl
\rtext{13} & \texttt{7 -5} & \texttt{0} & \smallnl
\rtext{14} & \texttt{7 -6} & \texttt{0} & \\
\midrule
\rtext{15} & \texttt{8 1 -7} & \texttt{0} & Define $p_8$ \smallnl
\rtext{16} & \texttt{-8 -1} & \texttt{0} & \smallnl
\rtext{17} & \texttt{-8 7} & \texttt{0} & \\
\midrule
\rtext{18} & \texttt{9 -1 2 -7} & \texttt{0} & Define $p_9$ \smallnl
\rtext{19} & \texttt{-9 1} & \texttt{0} & \smallnl
\rtext{20} & \texttt{-9 -2} & \texttt{0} & \smallnl
\rtext{21} & \texttt{-9 7} & \texttt{0} & \\
\midrule
\rtext{22} & \texttt{-10 8 9} & \texttt{0} & Define $s_{10}$ \smallnl
\rtext{23} & \texttt{10 -8} & \texttt{0} & \smallnl
\rtext{24} & \texttt{10 -9} & \texttt{0} & \\
\bottomrule
\end{tabular}
\end{minipage}
\begin{minipage}{0.49\textwidth}
(E) CPOG Assertions\\[1.2ex]
\begin{tabular}{llllll}
\toprule
\makebox[5mm]{ID} & \multicolumn{2}{l}{Clause} & \multicolumn{2}{l}{Hint} & Explanation \\
\midrule
\rtext{25} & \texttt{a 5 1 3} & \texttt{0} & \rtext{3 6} & \texttt{0} & $\obar{x}_1 \land \obar{x}_3 \imply p_5$ \smallnl
\rtext{26} & \texttt{a 6 1 -3} & \texttt{0} & \rtext{4 9} & \texttt{0} & $\obar{x}_1 \land x_3 \imply p_6$ \smallnl
\rtext{27} & \texttt{a 3 7 1} & \texttt{0} & \rtext{13 25} & \texttt{0} & $\obar{x}_3 \land \obar{x}_1 \imply s_7$  \smallnl
\rtext{28} & \texttt{a 7 1} & \texttt{0} & \rtext{27 14 26} & \texttt{0} & $\obar{x}_1 \imply s_7$  \smallnl
\rtext{29} & \texttt{a 8 1} & \texttt{0} & \rtext{28 15} & \texttt{0} & $\obar{x}_1 \imply p_8$  \smallnl
\rtext{30} & \texttt{a 5 -1 3} & \texttt{0} & \rtext{1 6} & \texttt{0} & $x_1 \land \obar{x}_3 \imply p_5$ \smallnl
\rtext{31} & \texttt{a 6 -1 -3} & \texttt{0} & \rtext{2 9} & \texttt{0} & $x_1 \land x_3 \imply p_6$ \smallnl
\rtext{32} & \texttt{a 3 7 -1} & \texttt{0} & \rtext{13 30} & \texttt{0} & $\obar{x}_3 \land x_1 \imply s_7$  \smallnl
\rtext{33} & \texttt{a 7 -1} & \texttt{0} & \rtext{32 14 31} & \texttt{0} & $x_1 \imply s_7$  \smallnl
\rtext{34} & \texttt{a 9 -1} & \texttt{0} & \rtext{5 33 18} & \texttt{0} & $x_1 \imply p_9$  \smallnl
\rtext{35} & \texttt{a 1 10} & \texttt{0} & \rtext{23 29} & \texttt{0} & $\obar{x}_1 \imply s_{10}$  \smallnl
\rtext{36} & \texttt{a 10} & \texttt{0} & \rtext{35 24 34} & \texttt{0} & $s_{10}$ \\
\bottomrule
\end{tabular}
\\[1.5ex]
(F) Input Clause Deletions\\[1.2ex]
\begin{tabular}{lllll}
  \toprule
 \multicolumn{3}{l}{CPOG line} & Explanation\\
\midrule
 \texttt{d 1} & \rtext{36 8 10 12 16 21 22} & \texttt{0} & Delete clause 1 \\
 \texttt{d 2} & \rtext{36 7 11 12 16 21 22} & \texttt{0} & Delete clause 2 \\
 \texttt{d 3} & \rtext{36 8 10 12 17 19 22} & \texttt{0} & Delete clause 3 \\
 \texttt{d 4} & \rtext{36 7 11 12 17 19 22} & \texttt{0} & Delete clause 4 \\
 \texttt{d 5} & \rtext{36 16 20 22} & \texttt{0} &  Delete clause 5 \\
\bottomrule
\end{tabular}
\end{minipage}
\caption{Example formula (A), its POG representation (B), and its CPOG proof (C), (E), and (F).  The defining clauses (D) are implicitly
defined by the POG declaration (C).}
\label{fig:eg4:all}
\end{figure}

Figure \ref{fig:eg4:all} illustrates an example formula and shows how
the CPOG file declares its POG representation.  The input formula (A)
consists of five clauses over variables $x_1$, $x_2$, $x_3$, and
$x_4$.  The generated POG (B) has six nonterminal nodes representing
four products and two sums.  We name these by the node
type (product $\nodep$ or sum $\nodes$), subscripted by the ID of the
extension variable.
% We denote the extension variable for nodes $\nodep_i$ and $\nodes_j$ as $p_i$ and $s_j$, respectively.
  The first part of the CPOG file (C) declares
these nodes using clause IDs that increment by three or four,
depending on whether the node has two children or three.  The last two
nonzero values in each sum declaration is the hint providing the
required mutual exclusion proof.

\subsection{Node Declarations}

We step through portions of the file to provide a better understanding of the CPOG proof framework.
Figure
\ref{fig:eg4:all}(D) shows the defining clauses that are implicitly
defined by the POG operation declarations.  These do not appear in the
CPOG file.  Referring back to the declarations of the sum nodes in
Figure \ref{fig:eg4:all}(C), we can see that the declaration of node
$\nodes_7$ has clause IDs 7 and 10 as the hint.  We can see in Figure
\ref{fig:eg4:all}(D) that these two clauses form a RUP proof for the clause
$\obar{p}_5 \lor \obar{p}_6$, showing that the two children of $\nodes_7$
have disjoint models.  Similarly, node $\nodes_{10}$ is declared as having
clause IDs 16 and 19 as the hint.  These form a RUP proof for the clause
$\obar{p}_8 \lor \obar{p}_9$, showing that the two children of
$\nodes_{10}$ have disjoint models.

\subsection{Forward Implication Proof}
\label{sect:forward}

Figure \ref{fig:eg4:all}(E) provides the sequence of assertions
leading to unit clause 36, consisting of the literal $s_{10}$.  This clause indicates that $\nodes_{10}$ is implied by the input clauses, i.e.,
any total assignment $\assign$
satisfying the input clauses must have its extension to $\eassign$ yield $\eassign(s_{10}) = 1$.
Working backward, we can see that
clause 35 indicates that variable $s_{10}$ will be implied by the input
clauses when $\assign(x_1) = 0$. Clause 34 indicates that node $p_9$ will
be implied by the input clauses when $\assign(x_1) = 1$, while defining clause 24 shows that node $s_{10}$ will be implied
by the input clauses when $\eassign(p_9) = 1$.  These three clauses
serve as the
hint for clause 36.

\subsection{Reverse Implication Proof}

Figure \ref{fig:eg4:all}(F) shows the RUP proof steps required to
delete the input clauses.  Consider the first of these, deleting
input clause $\obar{x}_1 \lor x_3 \lor \obar{x}_4$.  The requirement is to show
that there is no total assignment $\assign$ that falsifies this clause but extends to $\eassign$ such that $\eassign(s_{10}) = 1$.
The proof proceeds by first assuming that the clause is false, requiring
$\assign(x_1) = 1$, $\assign(x_3) = 0$, and $\assign(x_4) = 1$.  The hint then consists of unit
clauses (e.g., clause 36 asserting that $\eassign(s_{10}) = 1$) or
clauses that cause unit propagation.  Hint clauses 8 and 10 force the
assignments $\eassign(p_5) = \eassign(p_6) = 0$.  These, plus hint clause 12 force
$\eassign(s_7) = 0$.  This, plus hint clauses 16 and 21 force $\eassign(p_8) = \eassign(p_9) = 0$, leading,
via clause 22, to $\eassign(s_{10}) = 0$.  But this contradicts clause 36,
completing the RUP proof.  The deletion hints for the other input
clauses follow similar patterns---they work from the bottom nodes of
the POG upward, showing that any total assignment that falsifies the clause
must, when extended, have $\eassign(s_{10}) = 0$.

Deleting the asserted clauses is so simple that we do not show it.  It
involves simply deleting the clauses from clause number 35 down to
clause number 25, with each deletion using the same hint as was used
to add that clause.  In the end, therefore, only the defining clauses
for the POG nodes and the unit clause asserting $s_{10}$ remain,
completing a proof that the POG is logically equivalent to the input
formula.

\section{Generating CPOG from \decdnnf{}}
\label{sect:generating:cpog}

%%Figure~\ref{fig:chain} illustrates one method of generating a CPOG
%%file from an input formula in conjunctive normal form.  Our toolchain
%%is based on the \dfour{} knowledge compiler~\cite{lagniez:ijcai:2017},
%%but it would also work for other tools that use a top-down search
%%strategy to generate a \decdnnf{} representation.
A \decdnnf{}
graph can be directly translated  into a POG.
In doing this conversion,
our program performs simplifications to
eliminate Boolean constants.
Except in degenerate cases,
where the formula is unsatisfiable or a tautology,
we can therefore assume
that the POG does not contain any constant nodes.
In addition, negation is only
applied to variables, and so the only edges with negative polarity will have variables as children.
We can therefore
view the POG as consisting
of \emph{literal} nodes corresponding to input variables and their negations, along with
\emph{nonterminal} nodes, which can be further classified as \emph{product} and \emph{sum} nodes.

\subsection{Forward Implication Proof}

For input formula $\inputformula$ and its translation into a POG $P$
with root node $\noder$, the most challenging part of the proof is to
show that $\modelset(\inputformula) \subseteq \modelset(\phi_\noder)$,
i.e., that any total assignment $\assign$ that is a model of
$\inputformula$ will extend to assignment $\eassign$ such that
$\eassign(r) = 1$, for root literal $r$.  This part of the proof
consists of a series of clause assertions leading to one adding
$\{r\}$ as a unit clause.  We have devised two methods for generating
this proof.  The \emph{monolithic} approach makes just one call to a
proof-generating SAT solver and has it determine the relationship
between the two representations.  The monolithic
approach is \emph{logically complete}, i.e., assuming the CNF formula is equivalent
to the POG, and given enough time and computing resources, it can generate a CPOG proof of equivalence.
The \emph{structural} approach only works when the POG was generated from a \decdnnf{} graph having a structure that reflects the top-down process by which it was created.
It recursively traverses the POG, generating proof obligations at each
node encountered.  It may require multiple calls to a proof-generating SAT
solver.

As notation,
let $\psi$ be a subset of the clauses in $\inputformula$.
For partial assignment
$\passign$, the expression  $\simplify{\psi}{\passign}$ denotes the set of clauses $\gamma$
obtained from $\psi$ by: (i) eliminating any
clause containing a literal $\lit$ such that $\passign(\lit) = 1$,
(ii) for the remaining clauses eliminating those literals $\lit$ for
which $\passign(\lit) = 0$, and (iii) eliminating any duplicate or tautological clauses.
In doing these simplifications, we also track the \emph{provenance}
of each simplified clause $C$, i.e., which of the (possibly multiple) input clauses simplified to become $C$.
More formally, for $C \in \simplify{\psi}{\passign}$, we let $\prov_{\passign}(C, \psi)$ denote
those clauses $C' \in \psi$, such that
$C' \subseteq C \cup \bigcup_{\lit \in \passign} \obar{\lit}$.
We then extend the definition of $\prov$ to any simplified formula
$\gamma$ as $\prov_{\passign}(\gamma, \psi) = \bigcup_{C \in \gamma} \prov_{\passign}(C, \psi)$.

The monolithic approach
takes advantage of the clausal representations of
the input formula $\inputformula$ and the POG formula $\phi_\noder$.
We can express the negation of $\phi_\noder$ in clausal form as
$\theta_{\obar{\noder}} \doteq \bigcup_{\nodeu\in P} \simplify{\theta_{u}}{\{\obar{r}\}}$.
Forward implication will hold when $\inputformula \imply \phi_\noder$, or  equivalently
when the formula $\inputformula \land \theta_{\obar{\noder}}$
is unsatisfiable, where the
conjunction can be expressed as the union
of the two sets of clauses.  The proof generator writes the clauses to a file and invokes a proof-generating SAT solver.
For each clause $C$ in the unsatisfiability proof, it adds clause $\{r\} \cup C$ to the CPOG proof, and so the empty clause in the proof becomes the unit clause $\{r\}$.
Our experimental results show
that this approach can be very effective and generates short proofs
for smaller problems, but it does not scale well enough for general
use.

The structural approach to proof generation takes the form of a recursive procedure
$\validate(\nodeu, \passign, \psi)$ taking as arguments POG
node $\nodeu$, partial assignment
$\passign$, and a set of clauses $\psi \subseteq \inputformula$.
The procedure adds a number of clauses to the proof, culminating with
the addition of the \emph{target} clause:
$u \lor \bigvee_{\lit \in \passign} \obar{\lit}$, indicating
that $(\bigwedge_{\lit \in \passign} \lit) \imply u$, i.e.,
that any total
assignment $\assign$ such that $\passign \subseteq \assign$
will extend to assignment $\eassign$ such that $\eassign(u) = 1$.
The top-level call has $\nodeu = \noder$, $\passign = \emptyset$, and $\psi = \inputformula$.
The result will therefore be to add unit clause $\{r\}$ to the proof.
Here we present a correct, but somewhat inefficient formulation of
$\validate$.  We then refine it with some optimizations.

The recursive call $\validate(\nodeu, \passign, \psi)$ assumes that we have
traversed a path from the root node down to node $\nodeu$, with the
literals encountered in the product nodes forming the partial
assignment $\passign$.  The set of clauses $\psi$ can be a proper
subset of the input clauses $\inputformula$ when a product node has caused
a splitting into clauses containing disjoint variables.
The subgraph with root node $\nodeu$ should be a POG representation of the formula
$\simplify{\psi}{\passign}$.
%% Having the
%% routine add its target clause indicates that any total assignment
%% $\assign$ such that $\passign \subseteq \assign$ will yield $\alpha(u) = 1$.

The process for generating such a proof depends on the form of node $\nodeu$:
\begin{enumerate}
\item If $\nodeu$ is a literal $\lit'$, then the formula
  $\simplify{\psi}{\passign}$ must consist of the single unit clause
  $C = \{\lit'\}$, such that any $C' \in \prov_{\passign}(C, \psi)$ must have $C' \subseteq \{ \lit' \} \cup\, \bigcup_{\lit \in \passign} \obar{\lit}$.
  Any of these can
  serve as the target clause.
\item If $\nodeu$ is a sum node with children $\nodeu_1$ and $\nodeu_0$,
  then, since the node originated from a \decdnnf{} graph, there must be
  some variable $x$ such that either $\nodeu_1$ is a literal node for $x$ or $\nodeu_1$ is a
  product node containing a literal node for $x$ as a child.  In either case, we
  recursively call $\validate(\nodeu_1, \passign \cup \{ x \}, \psi)$.
  This will cause the addition of the target clause
  $u_1 \lor \obar{x} \lor \bigvee_{\lit \in \passign} \obar{\lit}$.
Similarly, either $\nodeu_0$ is a literal node for $\obar{x}$ or $\nodeu_0$ is a product node containing a literal node for $\obar{x}$ as
  a child.  In either case, we recursively call $\validate(\nodeu_0, \passign \cup \{ \obar{x} \}, \psi)$,
  causing the addition of the target clause
  $u_0 \lor x \lor \bigvee_{\lit \in \passign} \obar{\lit}$.
  These recursive results can be combined with the second and third defining clauses for $\nodeu$
(see Table~\ref{tab:defining}(B))
  to generate the target clause for $\nodeu$, requiring at most two RUP steps.
\item If $\nodeu$ is a product node, then we can divide its children
  into a set of literal nodes $\lambda$ and a set of nonterminal nodes $\nodeu_1, \nodeu_2, \ldots, \nodeu_k$.
  \begin{enumerate}
    \item For each literal
  $\lit \in \lambda$, we must prove that any total assignment $\alpha$ satisfying $\psi$ and such that
  $\passign \subseteq \alpha$ has $\alpha(\lit) = 1$.  In some
  cases, this can be done by simple Boolean constraint propagation (BCP).
  In other cases, we must prove that the formula
  $\simplify{\psi}{\passign \cup \{\obar{\lit}\}}$ is unsatisfiable.  We
  do so by writing the formula to a file, invoking a proof-generating
  SAT solver, and then converting the generated unsatisfiability proof
  into a sequence of clause additions in the CPOG file.
  (The solver is constrained to only use RUP inference rules, preventing it from introducing extension variables.)
\item For a single nonterminal child ($k = 1$), we recursively call
  $\validate \left(\nodeu_1, \passign \cup \lambda, \psi\right)$.
\item For multiple nonterminal children ($k > 1$),
  it must be the case that the clauses in
  $\gamma = \simplify{\psi}{\passign \cup \lambda}$ can be partitioned into $k$ subsets
  $\gamma_1, \gamma_2, \ldots, \gamma_k$ such that $\dependencyset(\gamma_i)
  \cap \dependencyset(\gamma_j) = \emptyset$ for $1 \leq i < j \leq k$,
  and we can match each node $\nodeu_i$ to subset $\gamma_i$ based on its
  literals.
  For each $i$ such that $1 \leq i \leq k$, let $\psi_i = \prov_{\passign}(\gamma_i, \psi)$, i.e., those input clauses in $\psi$ that, when simplified, became clause partition $\gamma_i$.
  We recursively call
  $\validate \left(\nodeu_i, \passign \cup \lambda, \psi_i\right)$.
\end{enumerate}
  We then generate the target clause for node $\nodeu$ with a single RUP step,
creating the hint by combining the results from the BCP and SAT calls for
  the literals, the recursively computed target clauses, and all but
  the first defining clause for node $\nodeu$
(see Table~\ref{tab:defining}(A)).
\end{enumerate}
Observe that
all of these steps involve a polynomial number of
operations per recursive call, with the exception of those that call
a SAT solver to validate a literal.

As examples, the forward implication proof of Figure~\ref{fig:eg4:all}(E) was
generated by the structural approach.  Working from step 36 backward,
we can see that steps 35 and 36 complete the call to
$\validate(\nodes_{10}, \emptyset, \phi_I)$.  This call used $x_1$ as
the splitting variable, first calling $\validate(\nodep_{8},
\{\obar{x}_1\}, \emptyset, \phi_I)$, which completed with step 29, and
$\validate(\nodep_{9}, \{x_1\}, \phi_I)$, which completed with step
34.  We see that each of these calls required separate traversals of
nodes $\nodes_7$, $\nodep_6$, and $\nodep_5$, with the former yielding
proof steps 25--27 and the latter yielding proof steps 30--32.  This
demonstrates how our simple formulation of $\validate$ effectively
expands the graph into a tree.  This shortcoming is avoided by the use of lemmas, as is described in Section~\ref{sect:lemma}.

\subsection{Reverse Implication Proof}

Completing the equivalence proof of input formula $\inputformula$ and its POG
representation with root node $\noder$ requires showing that
$\modelset(\phi_\noder) \subseteq \modelset(\inputformula)$.  This is done in the
CPOG framework by first deleting all asserted clauses, except for the
final unit clause for root literal $r$, and then deleting all of the
input clauses.
%%  Suppose at some point in this process that we want to
%% delete clause $C$ such that the remaining set of clauses will form the
%% CNF formula $\phi$.  Deleting $C$ requires proving that
%% $\phi \imply C$.

The asserted clauses can be deleted in reverse order, using the same
hints that were used in their original assertions.  By reversing the
order, those clauses that were used in the hint when a clause was
added will still remain when it is deleted.

Each input clause deletion can be done as a single RUP step, based
on an algorithm to test for clausal entailment in \detdnnf{} graphs~\cite{darwiche:jair:2002,capelli:sat:2019}.  The
proof generator constructs the hint sequence from the defining
clauses of the POG nodes via a single, bottom-up pass through the
graph.  The RUP deletion proof for input clause $C$ effectively proves that any
total assignment $\assign$ that does not
satisfy $C$ will extend to assignment $\eassign$ such that
$\eassign(r) = 0$.  It starts with the set of literals
$\{ \obar{\lit} \mid \lit \in C\}$, describing the required condition for
assignment $\assign$ to falsify clause $C$.
It then
adds literals via unit propagation until a
conflict arises.    Unit literal $r$ gets
added right away, setting up a potential conflict.

Working upward through the graph, node $\nodeu$ is \emph{marked} when
the collected set of literals forces $\eassign(u) = 0$.  When marking $\nodeu$, the
program adds $\obar{u}$ to the RUP literals and adds the appropriate
defining clause to the hint.  A literal node for
$\lit$ will be marked if $\lit \in C$, with no hint required.  If
product node $\nodeu$ has some child $\nodeu_i$ that is marked, then
$\nodeu$ is marked and clause $i+1$ from among its defining clauses (see Table~\ref{tab:defining}(A)) is
added to the hint.  Marking sum node $\nodeu$ requires that its two children are marked.
The first defining
clause for this node (see Table~\ref{tab:defining}(B)) will then be added to the hint.  At the very end, the program
(assuming the reverse implication holds) will attempt to mark root
node $\noder$, which would require $\eassign(r) = 0$, yielding a
conflict.

It can be seen that the reverse implication proof will be polynomial in the size of the POG\@, because
each clause deletion requires a single RUP step having a hint with length
bounded by the number of POG nodes.

\section{Optimizations}
\label{sect:optimization}

The performance of the structural proof generator for forward implication, both in its execution time and
the size of the proof generated, can be improved by two optimizations
described here.  A key feature is that they do not require any changes
to the proof framework---they build on the power of extended
resolution to enable the construction of new logical structures.  They
involve declaring new product nodes to encode products of literals.
These nodes are not part of the POG
representation of the formula; they serve only to enable the forward
implication proof.

The combination of these two optimization guarantees that (i) each call
to $\validate$ for a product node will cause at most one invocation of
the SAT solver, and (ii) each call to $\validate$ for any node $\nodeu$
will cause further recursive calls only once.  Our experimental
results (Section~\ref{sect:experimental:optimize}) show that these
optimizations yield substantial benefits.

\subsection{Literal Grouping}
\label{sect:grouping}

A single recursive step of $\validate$ can encounter product nodes
having many literals as children.  The naive formulation of $\validate$
considers each literal $\lit \in \lambda$ separately.
Literal grouping allows all literals to be validated with a single call to a SAT solver.
It collects those literals
$\lit_1, \lit_2, \ldots, \lit_m$ that cannot be validated by BCP and defines a
product node $\nodev$ having these literals as children.  The goal
then becomes to prove that any total assignment $\assign$ consistent with the partial assignment $\passign$,
must, when extended to $\eassign$, yield $\eassign(v) = 1$.
A single call to the solver can generate this proof by invoking it on the formula
  $\simplify{\psi}{\passign} \cup \simplify{\theta_{v}}{\{ \obar{v} \}}$, which should be unsatisfiable.
  The proof steps can be mapped back into clause addition steps in the CPOG file, incorporating the
  input clauses and the defining clauses for $\nodev$ into the hints.

\subsection{Lemmas}
\label{sect:lemma}

As we have noted, the recursive calling of $\validate$ starting at
root $\noder$ effectively expands the POG into a tree, and this can
lead to an exponential number of calls.
These shared subgraphs arise when the knowledge compiler employs \emph{clause caching}
to detect that the simplified set of
clauses arising from one partial assignment to the literals matches that
of a previous partial assignment~\cite{darwiche:aaai:2002}.
When this \decdnnf{} node is translated into POG
node $\nodeu$, the proof generator can assume (and also check), that
there is a simplified set of clauses $\gamma_{\nodeu}$
for which the subgraph with root $\nodeu$ is its POG representation.

The proof generator can exploit the sharing of subgraphs
by constructing and proving a \emph{lemma} for each node
$\nodeu$ having $\indegree(\nodeu) > 1$.  This proof shows that any
total assignment $\assign$ that satisfies formula $\gamma_{\nodeu}$ must extend to assignment $\eassign$ such that
$\eassign(u) = 1$.  This lemma is then invoked for every node having
$\nodeu$ as a child.
As a result, the generator will make recursive calls during a call to $\validate$ only once for each node in the POG\@.

The challenge for implementing this strategy is to find a way to
represent the clauses for the simplified formula $\gamma_{\nodeu}$ in the CPOG file.  Some may be
unaltered input clauses, and these can be used directly.  Others,
however will be clauses that do not appear in the input formula.  We
implement these by adding POG product nodes to the CPOG file to create
the appropriate clauses.  Consider an \emph{argument} clause
$C \in \gamma_{\nodeu}$ with $C = \lit_1 \lor \lit_2 \lor \cdots \lor \lit_k$.  If we
define a product node $\nodev$ with arguments
$\obar{\lit}_1, \obar{\lit}_2, \ldots, \obar{\lit}_k$, 
we will introduce a defining clause
$v \lor \lit_1 \lor \lit_2 \lor \cdots \lit_k$.  We call this a {\em
  synthetic} clause having $\obar{v}$ as the \emph{guard literal}.
That is, a partial assignment $\passign$ such that $\passign(v) = 0$ will {\em
  activate} the clause, causing it to represent argument clause $C$.  On the other
hand, a partial assignment with $\passign(v) = 1$ will
cause the clause to become a tautology and therefore have no effect.

Suppose for every clause $C_j \in \gamma_{\nodeu}$ that does not correspond to
an input clause, we generate a synthetic clause $C'_j$ with guard literal
$\obar{v}_j$, for $1 \leq j \leq m$.  Let $\gamma'_{\nodeu}$ be the formula where each clause $C_j$ is replaced by synthetic clause $C'_j$,
while input clauses in $\gamma_{\nodeu}$ are left unchanged.
Let $\lassign = \{ \obar{v}_1, \obar{v}_2, \ldots, \obar{v}_m \}$.
Invoking $\validate(\nodeu, \lassign, \gamma'_{\nodeu})$
 will then prove a lemma, given by the target clause
 $u \lor v_1 \lor v_2 \lor \cdots \lor v_m$,
 showing that any total assignment $\assign$ that activates the synthetic clauses will cause $u$ to be assigned $1$.
 More precisely, given assignment $\assign$ and its extension $\eassign$, if $\eassign(v_j) = 0$ for every guard literal $\obar{v}_j$, then $\eassign(u) = 1$.

Later, when node $\nodeu$ is encountered by a call to $\validate(\nodeu, \passign, \psi)$, we invoke the lemma
by showing that each synthetic clause
$C_j$ matches some simplified clause in $\simplify{\psi}{\passign}$.  More precisely,
for $1 \leq j \leq m$,
we use clause addition to assert the clause
$\obar{v}_j \lor \bigvee_{\lit \in \passign} \obar{\lit}$,
showing that synthetic clause $C_j$ will be activated.
Combining the lemma with these activations provides a derivation of the target clause for the call to $\validate$.

Observe that the lemma structure can be hierarchical, since a shared
subgraph may contain nodes that are themselves roots of shared
subgraphs.  Even then, the principles described allow the
definition, proof, and applications of a lemma for each shared node in
the graph.  For any node $\nodeu$, the first call to
$\validate(\nodeu, \passign, \psi)$ may require further recursion,
but any subsequent call can simply reuse the lemma proved by the first call.

\subsection{Lemma Example}
\label{app:lemma:eg}

\begin{figure}
(A) Additional nodes\\[1.0em]
\begin{tabular}{llll}
\toprule
\makebox[5mm]{ID} & \multicolumn{2}{l}{CPOG line} & Explanation \\
\midrule
\rtext{25} & \texttt{p 11 -3 4} & \texttt{0} & $v_{11} = \obar{x}_3 \pand {x}_4$ \\
\rtext{28} & \texttt{p 12  3 -4} & \texttt{0} & $v_{12} = {x}_3 \pand \obar{x}_4$ \\
\bottomrule
\end{tabular}
\\[1.0em]
(B) Implicit Clauses\\[1.2em]
\begin{tabular}{llll}
\toprule
\makebox[5mm]{ID} & \multicolumn{2}{l}{Clauses} & Explanation \\
\midrule
\rtext{25} & \texttt{11 3 -4} & \texttt{0} & Argument clause $\{x_3 ,\, \obar{x}_4\}$, activated by $\obar{v}_{11}$ \\
\rtext{26} & \texttt{-11 -3} & \texttt{0} & \\
\rtext{27} & \texttt{-11 4} & \texttt{0} & \\
\midrule
\rtext{28} & \texttt{12 -3 4} & \texttt{0} & Argument clause $\{\obar{x}_3,\,  {x}_4\}$, activated by $\obar{v}_{12}$ \\
\rtext{29} & \texttt{-12 3} & \texttt{0} & \\
\rtext{30} & \texttt{-12 -4} & \texttt{0} & \\
\bottomrule
\end{tabular}
\\[1.0em]
(C) CPOG Assertions\\[1.0em]
\begin{tabular}{llllll}
\toprule
\makebox[5mm]{ID} & \multicolumn{2}{l}{Clause} & \multicolumn{2}{l}{Hint} & Explanation \\
\midrule
\multicolumn{6}{l}{Lemma Proof} \\
\rtext{31} & \texttt{a 5 11 12 3} & \texttt{0} & \rtext{25 6} & \texttt{0} & $(\obar{v}_{11} \land \obar{v}_{12}) \land \obar{x}_3 \imply p_{5}$ \\
\rtext{32} & \texttt{a 6 11 12 -3} & \texttt{0} & \rtext{28 9} & \texttt{0} & $(\obar{v}_{11} \land \obar{v}_{12}) \land {x}_3 \imply p_{6}$ \\
\rtext{33} & \texttt{a 3 7 11 12} & \texttt{0} & \rtext{13 31} & \texttt{0} & $(\obar{v}_{11} \land \obar{v}_{12}) \land \obar{x}_3 \imply s_{7}$ \\
\rtext{34} & \texttt{a 7 11 12} & \texttt{0} & \rtext{33 14 32} & \texttt{0} & $(\obar{v}_{11} \land \obar{v}_{12}) \imply s_{7}$ \\
\midrule
\multicolumn{6}{l}{Lemma Application \#1} \\
\rtext{35} & \texttt{a -11 1} & \texttt{0} & \rtext{26 27 3} & \texttt{0} & $\obar{x}_1 \imply \obar{v}_{11}$ \\
\rtext{36} & \texttt{a -12 1} & \texttt{0} & \rtext{29 30 4} & \texttt{0} & $\obar{x}_1 \imply \obar{v}_{12}$ \\
\rtext{37} & \texttt{a 7 1} & \texttt{0} & \rtext{35 36 34} & \texttt{0} & $\obar{x}_1 \imply s_7$ \\
\midrule
\rtext{38} & \texttt{a 8 1} & \texttt{0} & \rtext{37 15} & \texttt{0} & $\obar{x}_1 \imply p_8$ \\
\midrule
\multicolumn{6}{l}{Lemma Application \#2} \\
\rtext{39} & \texttt{a -11 -1} & \texttt{0} & \rtext{26 27 1} & \texttt{0} & ${x}_1 \imply \obar{v}_{11}$ \\
\rtext{40} & \texttt{a -12 -1} & \texttt{0} & \rtext{29 30 2} & \texttt{0} & ${x}_1 \imply \obar{v}_{12}$ \\
\rtext{41} & \texttt{a 7 -1} & \texttt{0} & \rtext{39 40 34} & \texttt{0} & ${x}_1 \imply s_7$ \\
\midrule
\rtext{42} & \texttt{a 9 -1} & \texttt{0} & \rtext{5 41 18} & \texttt{0} & ${x}_1 \imply p_9$ \\
\rtext{43} & \texttt{a 1 10} & \texttt{0} & \rtext{23 38} & \texttt{0} & $\obar{x}_1 \imply s_{10}$  \\
\rtext{44} & \texttt{a 10} & \texttt{0} & \rtext{43 24 42} & \texttt{0} & $s_{10}$ \\
\bottomrule
\end{tabular}
\caption{Example of lemma definition, proof, and application}
\label{fig:eg4:lemmas}
\end{figure}

Figure~\ref{fig:eg4:lemmas} shows an alternate forward implication
proof for the example of Figure~\ref{fig:eg4:all} using a lemma to
represent the shared node $\nodes_7$.  We can see that the POG with
this node as root encodes the Boolean formula $x_3 \leftrightarrow x_4$, having a CNF representation consisting of the clauses
$\{x_3 ,\, \obar{x}_4\}$ and $\{\obar{x}_3 ,\, {x}_4\}$.  The product node
declarations shown in Figure~\ref{fig:eg4:lemmas}(A) create synthetic
clauses 25 and 28 to encode these arguments with activating literals
$\obar{v}_{11}$ and $\obar{v}_{12}$, respectively.  Clauses 31--34
then provide a proof of the lemma, stating that any assignment
$\assign$ that activates these clauses will, when extended,  assign $1$ to $s_7$.
Clauses 35 and 36 state that an assignment with $\assign(x_1) = 0$
will, when extended, cause the first synthetic clause to activate due to input clause
3, and it will cause the second synthetic clause to activate due to
input clause 4.  From this, clause 37 can use the lemma to state that
assigning $0$ to $x_1$ will cause $s_7$ to evaluate to $1$.  Similarly,
clauses 39 and 40 serve to activate the synthetic clauses when
$\assign(x_1) = 1$, due to input clauses 1 and 2, and clause 41 then
uses the lemma to state that assigning $1$ to $x_1$ will cause $s_7$ to
evaluate to $1$.

In this example, adding the lemma increases the proof length, but that
is only because it is such a simple formula.

%%\input{lean}

%% Separate file for Lean presentation

\section{A Formally Verified Toolchain}
\label{sect:formally-verified-toolchain}
\label{sect:lean:subtle-condition}

We set out to formally verify the system with two goals in mind:
first, to ensure that the CPOG framework is mathematically sound;
and second, to implement correct-by-construction proof checking
and ring evaluation (the ``Trusted Code'' components of Figure~\ref{fig:chain}).
These two goals are achieved with a single proof development
in the \lean{} programming language~\cite{demoura:cade:2021}.
Verification was greatly aided
by the Aesop~\cite{23limperg_aesop_white_box_best_first_proof_search_lean}
automated proof search tactic.
\lean{} is based on a logical foundation
in which expressions have a computational interpretation.
As in other proof assistants such as Isabelle~\cite{nipkow:et:al:02} and Coq~\cite{coq},
functions defined in the formal system
can be compiled to machine code.
At the same time,
we can state and prove claims about them within the same system,
thereby verifying that our functions compute the intended results.
In this section,
we describe the functionality we implemented,
what we proved about it,
and the assumptions we made.

\paragraph{Data structures and mathematical model.}
When thinking about formal verification,
it is helpful to distinguish between data structures
that play a role in the code being executed,
and \emph{ghost} definitions that serve as a mathematical model,
allowing us to state and prove specifications,
but are erased during compilation and not executed.
In the codebase,
we generally store definitions in the two classes
under {\tt Data/} and {\tt Model/},
respectively.

Among the former is our representation of CNF formulas.
Following the DIMACS CNF convention,
a variable is represented as a positive natural number,
a literal is a non-zero integer,
a clause is an array of literals,
and a CNF formula is an array of clauses.
\begin{lstlisting}
def Var := { x : Nat // 0 < x }
def ILit := { i : Int // i ≠ 0 }
abbrev IClause := Array ILit
abbrev ICnf := Array IClause
\end{lstlisting}

A POG is represented as a flat array of elements.
Each element {\tt PogElt} of a POG is either a variable,
a binary disjunction (sum),
or an arbitrary conjunction (product).
\begin{lstlisting}
inductive PogElt where
  | var (x : Var) : PogElt
  | disj (x : Var) (l r : ILit) : PogElt
  | conj (x : Var) (args : Array ILit) : PogElt
\end{lstlisting}
In the first case,
the argument \lstinline{x} is the index of an input variable;
in disjunctions and conjunctions,
it is an extension variable appearing in the CPOG file.
A \lstinline{Pog} is then an array of \lstinline{PogElt}s that is well-founded
in the sense that each element depends only on prior elements in the array.
Note that representing edges as literals
allows us to negate the arguments to \lstinline{disj} and \lstinline{conj}.

On the mathematical side,
our specifications rely on a general theory of propositional logic
mirroring Section~\ref{sect:logical:foundations}.
The type \lstinline{PropForm} describes the syntax of propositional formulas.
It is generic over the type of variables,
so we instantiate it with numeric variables as \lstinline{PropForm Var}.
\begin{lstlisting}
inductive PropForm (ν : Type u)
  | var (x : ν)
  | tr
  | fls
  | neg    (φ : PropForm ν)
  | conj   (φ₁ φ₂ : PropForm ν)
  | disj   (φ₁ φ₂ : PropForm ν)
  | impl   (φ₁ φ₂ : PropForm ν)
  | biImpl (φ₁ φ₂ : PropForm ν)
\end{lstlisting}
Assignments of truth values are taken to be total functions \lstinline{PropAssignment Var := Var → Bool}.
Requiring totality is not a limitation:
instead of talking about two equal,
partial assignments to a subset $X' \subseteq X$ of variables,
we can more conveniently talk about two total assignments that agree on $X'$.
We write \lstinline{σ ⊨ φ} when \lstinline{σ : PropAssignment Var} satisfies \lstinline{φ : PropForm Var}.

Functions \lstinline{ILit.toPropForm},
\lstinline{IClause.toPropForm},
\lstinline{ICnf.toPropForm},
and \lstinline{Pog.toPropForm}
relate data structures to the formulas they encode.
For example, given a literal \lstinline{u},
\lstinline{P.toPropForm u} denotes the interpretation
of the node $\nodeu$ corresponding to \lstinline{u} in the POG \lstinline{P}
as a propositional formula $\phi_\nodeu/\neg\phi_\nodeu$
over the input variables.
It is negated if \lstinline{u} has negative polarity.
Lean provides a convenient ``anonymous projection'' notation
that allows writing \lstinline{P.toPropForm u} instead of \lstinline{Pog.toPropForm P u}
when \lstinline{P} has type \lstinline{Pog},
\lstinline{C.toPropForm} instead of \lstinline{IClause.toPropForm C}
when \lstinline{C} has type \lstinline{IClause},
etc.

In order to reason about composite formulas,
we found it easier to work with propositional formulas modulo logical equivalence,
a structure known in logic as the \emph{Lindenbaum--Tarski algebra},
rather than using \lstinline{PropForm} directly.
Its advantage is that equivalent but not syntactically equal formulas
(such as $x \vee \neg x$ and $\top$)
give rise to equal elements in the algebra,
and equality has a privileged position in proof assistants based on type theory:
equals can be substituted for equals in any context.
In this way, forgetting syntactic detail is helpful.
On the other hand,
using the algebra gives rise to some challenges.
The algebra, called {\tt PropFun}, is defined as a quotient,
with Boolean operations and the entailment relation
lifted from the syntax of formulas to the new type.
It is no longer straightforward to say
when an element of the quotient ``depends'' on a variable
since equivalent formulas can refer to different sets of variables.
Instead, we use a semantic notion of dependence
in which an element $\phi$ of the quotient depends on a variable $x$
if and only if there is a truth assignment that satisfies $\phi$,
but falsifies $\phi$ after $x$ is flipped.
\begin{lstlisting}
/-- The semantic variables of `φ` are those it is sensitive to as a Boolean
function. Unlike `vars`, this set is stable under equivalence of formulas. -/
def semVars (φ : PropFun ν) : Set ν :=
  { x | ∃ (τ : PropAssignment ν), τ ⊨ φ ∧ τ.set x (!τ x) ⊭ φ }
\end{lstlisting}

\paragraph{Proof checking.}
The goal of a CPOG proof is to construct a POG
that is equivalent to the input CNF $\inputformula$.
The database of active clauses,
the POG being constructed,
and its root literal,
are stored in a checker state structure {\tt PreState}.
The checker begins by parsing the input formula,
initializing the active clauses to $\theta \leftarrow \inputformula$,
and initializing the POG $P$ to an empty one.
It then processes every step of the CPOG proof,
either modifying its state by adding/deleting clauses in $\theta$
and adding nodes to $P$,
or throwing an exception if a step is incorrect.
Afterwards, it carries out the \textsc{final conditions} check of Section~\ref{subsection:semantics}.

Throughout the process,
we maintain invariants needed to establish the final result.
These ensure that $P$ is partitioned
and that a successful final check entails the logical equivalence
of $\inputformula$ and $\phi_\noder$,
where $\noder$ is the final POG root (Theorem~\ref{thm:lean:equiv}).
Formally, we define a type {\tt State}
consisting of those {\tt PreState}s
that satisfy all the invariants.
A {\tt State} is a structure
combining {\tt PreState} fields with additional ones
storing computationally irrelevant \emph{ghost state}
that asserts the invariants.
The fields of \lstinline{st : PreState} include
\lstinline{st.inputCnf} for $\inputformula$,
\lstinline{st.clauseDb} for $\theta$,
and \lstinline{st.pog} for $P$.
We write \lstinline{st.pogDefsForm} for the clausal POG definitions formula $\bigwedge_{\nodeu\in P}\theta_u$,
and \lstinline{st.allVars} for all variables (original and extension) added so far.
For any $\nodeu\in P$,
\lstinline{st.pog.toPropForm u} computes $\phi_\nodeu$.

The first invariant states that assignments to input variables
extend uniquely to extension variables defining the POG nodes.
In the formalization, we split this into extension and uniqueness:

%% Temporary Version
%% \begin{lstlisting}
%% /-- Any assignment satisfying φ₁ extends to φ₂ while preserving values on X. -/
%% /-- Assignments satisfying φ are determined on Y by their values on X. -/
%% \end{lstlisting}
\begin{minipage}{\textwidth}
\begin{lstlisting}
/-- Any assignment satisfying φ₁ extends to φ₂ while preserving values on X. -/
def extendsOver (X : Set Var) (φ₁ φ₂ : PropForm Var) :=
 ∀ (σ₁ : PropAssignment Var), σ₁ ⊨ φ₁ → ∃ σ₂, σ₁.agreeOn X σ₂ ∧ σ₂ ⊨ φ₂
/-- Assignments satisfying φ are determined on Y by their values on X. -/
def uniqueExt (X Y : Set Var) (φ : PropForm Var) :=
 ∀ (σ₁ σ₂ : PropAssignment Var), σ₁ ⊨ φ → σ₂ ⊨ φ → σ₁.agreeOn X σ₂ →
    σ₁.agreeOn Y σ₂

invariants.extends_pogDefsForm : extendsOver st.inputCnf.vars ⊤ st.pogDefsForm
invariants.uep_pogDefsForm : uniqueExt st.inputCnf.vars st.allVars st.pogDefsForm
\end{lstlisting}
\end{minipage}
Note that in the definition of \lstinline{uniqueExt}, the arrows associate to the right,
so the definition says that the three assumptions imply the conclusion.
The next invariant guarantees that the set of solutions over the input variables is preserved:
\begin{lstlisting}
def equivalentOver (X : Set Var) (φ₁ φ₂ : PropForm Var) :=
  extendsOver X φ₁ φ₂ ∧ extendsOver X φ₂ φ₁

invariants.equivInput : equivalentOver st.inputCnf.vars st.inputCnf st.clauseDb
\end{lstlisting}
Finally, for every node $\nodeu\in P$ with corresponding literal $u$ we ensure that $\phi_\nodeu$ is partitioned (Definition~\ref{def:partitioned-operation-formula}) and relate $\phi_\nodeu$ to its clausal encoding $\theta_u \doteq u \wedge \bigwedge_{\nodev\in P}\theta_v$:
\begin{lstlisting}
def partitioned : PropForm Var → Prop
  | tr | fls | var _ => True
  | neg φ    => φ.partitioned
  | disj φ ψ => φ.partitioned ∧ ψ.partitioned ∧ ∀ τ, ¬(τ ⊨ φ ∧ τ ⊨ ψ)
  | conj φ ψ => φ.partitioned ∧ ψ.partitioned ∧ φ.vars ∩ ψ.vars = ∅
\end{lstlisting}
\begin{lstlisting}
invariants.partitioned : ∀ (u : ILit), (st.pog.toPropForm u).partitioned
invariants.equivalent_lits : ∀ (u : ILit), equivalentOver st.inputCnf.vars
    (u ∧ st.pogDefsForm) (st.pog.toPropForm x)
\end{lstlisting}

The bulk of our work involved showing
that these invariants are indeed maintained by the checker
when going through a valid CPOG proof,
modifying the active clause database and the POG.
Together with additional, technical invariants
about the correctness of cached computations,
they imply the soundness theorem for $P$ with root node $\noder$:

\begin{thm}
\label{thm:lean:equiv}
If the proof checker has assembled POG $P$ with root node $\noder$ starting from input formula $\inputformula$, and \textsc{final conditions} (as stated in Section~\ref{subsection:semantics}) hold of the checker state, then $\inputformula$ is logically equivalent to $\phi_\noder$.
\end{thm}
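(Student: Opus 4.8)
The plan is to obtain the conclusion by chaining two of the maintained checker invariants through a semantic reading of the final clause database, using transitivity of equivalence over the input variables $\varset$. Write $\theta$ for the active clause database (\texttt{st.clauseDb}) and $\pogformula = \{\{r\}\} \cup \bigcup_{\nodeu \in P}\theta_u$ for the Tseitin encoding of $P$ together with the root unit; as a formula this is $r \wedge \texttt{st.pogDefsForm}$. The first step is to show that the \textsc{final conditions} pin down $\theta$ up to logical equivalence as $\pogformula$: condition~2 removes every input clause, condition~1 leaves exactly one surviving RUP-added clause, namely the unit $\{r\}$, and the defining clauses introduced at each node declaration are all still present. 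Hence $\theta$ coincides (as a set of clauses, and therefore as a formula) with $\pogformula$.

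Next I would invoke the two relevant invariants. The invariant \texttt{equivInput} asserts $\inputformula \eequiv \theta$, which by the previous step upgrades to $\inputformula \eequiv \pogformula$. The invariant \texttt{equivalent\_lits}, specialized to the root literal $r$, asserts $(r \wedge \texttt{st.pogDefsForm}) \eequiv \phi_\noder$, i.e.\ $\pogformula \eequiv \phi_\noder$; this is where the extension and uniqueness invariants (\texttt{extends\_pogDefsForm} and \texttt{uep\_pogDefsForm}) have already done their work, since they are what licenses \texttt{equivalent\_lits} in the first place. Because $\eequiv$ unfolds to a two-sided \texttt{extendsOver} statement and \texttt{agreeOn}~$\varset$ is transitive, the relation $\eequiv$ is itself transitive; composing the two facts yields $\inputformula \eequiv \phi_\noder$.

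Finally I would close the gap between equivalence over $\varset$ and plain logical equivalence. Both $\inputformula$ and $\phi_\noder = \texttt{P.toPropForm}\ r$ have their \texttt{semVars} contained in the input variables $\varset$ (the extension variables are eliminated by \texttt{toPropForm}), so any two assignments that \texttt{agreeOn}~$\varset$ give each of them the same truth value. For such formulas $\eequiv$ collapses to ordinary equivalence $\phi_1 \ifandonlyif \phi_2$; this is a short lemma about \texttt{semVars} and \texttt{extendsOver}. This delivers the claimed logical equivalence of $\inputformula$ and $\phi_\noder$.

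The main obstacle is the first step: bridging the purely syntactic \textsc{final conditions} check, which reasons about which clause identifiers remain live in the database, to the semantic statement $\theta = \pogformula$. This relies on a bookkeeping invariant guaranteeing that the defining clauses of every declared node are retained in $\theta$ throughout the proof---they are added at declaration and, unlike asserted clauses, are never deleted---together with the accounting that no asserted clause other than $\{r\}$ survives. Everything downstream, namely transitivity of $\eequiv$ and the collapse to logical equivalence, is routine once the invariants are in hand; this is why the real effort in the development lies in \emph{establishing} \texttt{equivInput} and \texttt{equivalent\_lits} step-by-step as the proof is checked, rather than in this final assembly.
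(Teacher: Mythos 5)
Your proposal is correct and follows essentially the same route as the paper's (deliberately terse) proof: the \textsc{final conditions} pin the active clause database down to $\pogformula = \{\{r\}\} \cup \bigcup_{\nodeu \in P}\theta_u$, after which the invariants \texttt{equivInput} and \texttt{equivalent\_lits} (at the root literal) compose by transitivity of \texttt{equivalentOver}, with the remaining details deferred to the Lean formalization. Your elaborations---transitivity via transitivity of \texttt{agreeOn}, the \texttt{semVars}-based collapse from equivalence over $\varset$ to plain equivalence, and the bookkeeping that defining clauses are never deleted---are precisely the technical lemmas the formal development discharges, so nothing in your argument diverges from the paper's.
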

\begin{proof}
\textsc{Final conditions} imply that the active clausal formula $\theta$ is exactly $\pogformula \doteq \{\{r\}\} \cup \; \bigcup_{\nodeu \in P} \theta_{u}$. The conclusion follows from this and the checker invariants. The full proof is formally verified in Lean.
\end{proof}
After certifying a CPOG proof, the checker can pass its in-memory POG representation to the ring evaluator, along with the partitioning guarantee provided by \texttt{invariants.partitioned}.

\paragraph{Ring evaluation.} We formalized the central quantity (\ref{eqn:rep}) in the ring evaluation problem
(Definition \ref{def:ring_evaluation}) in a commutative ring \lstinline{R} as follows:
\begin{lstlisting}
def weightSum {R : Type} [CommRing R]
    (weight : Var → R) (φ : PropForm Var) (s : Finset Var) : R :=
  ∑ τ in models φ s, ∏ x in s, if τ x then weight x else 1 - weight x
\end{lstlisting}
The rules for efficient ring evaluation of partitioned formulas are expressed as:
\begin{lstlisting}
def ringEval (weight : Var → R) : PropForm Var → R
  | tr       => 1
  | fls      => 0
  | var x    => weight x
  | neg φ    => 1 - ringEval weight φ
  | disj φ ψ => ringEval weight φ + ringEval weight ψ
  | conj φ ψ => ringEval weight φ * ringEval weight ψ
\end{lstlisting}
Proposition~\ref{prop:ring:eval} is then formalized as follows:
\begin{lstlisting}
theorem ringEval_eq_weightSum (weight : Var → R) {φ : PropForm Var} :
    partitioned φ → ringEval weight φ = weightSum weight φ (vars φ)
\end{lstlisting}
To efficiently compute the ring evaluation of a formula represented by a POG node, we implemented
\lstinline{Pog.ringEval} and then proved that it matches the specification above:
\begin{lstlisting}
theorem ringEval_eq_ringEval (pog : Pog) (weight : Var → R) (x : Var) :
  pog.ringEval weight x = (pog.toPropForm x).ringEval weight
\end{lstlisting}
Applying this to the output of our verified CPOG proof checker, which we know to be partitioned
and equivalent to the input formula $\inputformula$, we obtain a proof that our toolchain computes
the correct ring evaluation of $\inputformula$.

\paragraph{Model counting.} Finally, we established that ring evaluation with the appropriate weights
corresponds to the standard model count. To do so, we defined a function that
carries out an integer calculation of the number of models over a set of variables
of cardinality \lstinline{nVars}:
\begin{lstlisting}
def countModels (nVars : Nat) : PropForm Var → Nat
  | tr       => 2^nVars
  | fls      => 0
  | var _    => 2^(nVars - 1)
  | neg φ    => 2^nVars - countModels nVars φ
  | disj φ ψ => countModels nVars φ + countModels nVars ψ
  | conj φ ψ => countModels nVars φ * countModels nVars ψ / 2^nVars
\end{lstlisting}
We then formally proved that for a partitioned formula whose variables are among a finite set
\lstinline{s}, this computation really does count the number of models over \lstinline{s}:
\begin{lstlisting}
theorem countModels_eq_card_models {φ : PropForm Var} {s : Finset Var} :
  vars φ ⊆ s → partitioned φ → countModels (card s) φ = card (models φ s)
\end{lstlisting}
In particular, taking \lstinline{s} to be exactly the variables appearing in \lstinline{φ},
we have that the number of models of \lstinline{φ} over its variables is
\lstinline{countModels φ (card (vars φ))}.

\paragraph{Trust.}
To conclude this section, let us clarify what has been verified and what has to be trusted.
Recall that our first step is to parse CNF and CPOG files in order to read in the initial
formula and the proof. We do not verify this step. Instead, the verified checker exposes
flags \verb+--print-cnf+ and \verb+--print-cpog+ which reprint the consumed formula or proof,
respectively. Comparing this to the actual files using {\tt diff} provides an easy way of ensuring
that what was parsed matches their contents. This involves trusting only the correctness of the
print procedure and {\tt diff}. Similarly, if one wants to establish the correctness of the POG
contained in the CPOG file, one can print out the POG that is constructed by the checker and compare.

Lean's code extraction replaces calculations on natural numbers and integers with
efficient but unverified arbitrary precision versions.
Lean also uses an efficient implementation of arrays; within the
formal system, these are defined in terms of lists, but code extraction replaces them
with dynamic arrays and uses reference counting to allow destructive updates when it is safe
to do so \cite{Ullrich:de:Moura:19}.

In summary, in addition to trusting Lean's foundation and kernel checker,
we also have to trust that code extraction respects that foundation,
that the implementations of basic data structures satisfy their descriptions,
and that, after parsing, the computation uses the correct input formula.
All of our specifications have been completely proven and verified relative to these assumptions.

%P: clause db
% We define clause databases in {\tt ClauseDb} as hashmaps
% that store each clause together with a flag
% indicating whether it has been deleted.

%P: files and directories
% Our code is contained in the directory {\tt ProofChecker}.
% The main checker loop is implemented in {\tt Checker/CheckerCore.lean}.
% The ring evaluator and model counter are contained in the
% {\tt Count} directory.

\section{Implementation}
We have implemented programs that, along with
the \dfour{} knowledge compiler, form the toolchain illustrated in
Figure~\ref{fig:chain}.\footnote{The source code for all tools, as well as the \lean{} derivation and checker,
is available at \url{https://github.com/rebryant/cpog/releases/tag/v1.0.0}.  Upon acceptance of this paper, we will create an archival version of the code, as well as the experimental results, on Zenodo.}  The proof generator is the same in both
cases, since it need not be trusted.
Our \emph{verified}
version of the proof checker and ring evaluator have been formally
verified within the \lean{} theorem prover.  Our long term goal is to
rely on these.  Our \emph{prototype} version is written in C\@.
It is faster and
more scalable, but we anticipate its need will diminish as the
verified version is further optimized.
%% Our long-term goal is to have a toolchain suitable for use in a
%% knowledge compiler competition, along the lines of the annual
%% competitions for other automated reasoning tools.  (The results could be evaluated for correctness
%% either by having them generate checkable proofs or through randomized equivalence tests via function hashing.)
%% Our experiments
%% consider the benchmark problems and computing capabilities that would
%% arise in such a competition and evaluate whether our implementationed toolchain
%% could handle these problems.  As with the original proof-generating
%% SAT competitions, we assume that the proofs at a competition could be
%% checked with the unverified checker.  Having a verified checker
%% that works on some subset of the benchmarks would provide an
%% additional degree of certainty.

Our proof generator is written in C/C++ and uses a recent version of
the \cadical{} SAT solver that directly generates hinted proofs in
LRAT format~\cite{biere:sat:2023}.  It also uses their tool \ltrim{}
to reduce the length of the generated proofs.

Section~\ref{sect:forward} presented two methods for generating the
forward implication proof: a monolithic method relying on a single
call to a proof-generating SAT solver, and a structural method that
traverses the POG recursively and generates proof assertions for each
node encountered.  We devised an approach that combines the two,
forming our \emph{hybrid} method.  Based on problem parameters, this approach
starts with a top-down recursion, as with the structural method, but
it shifts to a monolithic method once the subgraph size drops below a
threshold.  Section~\ref{sect:experimental:hybrid} describes the
experiments used to determine the parameters for this approach in more
detail.

The proof generator can optionally be instructed to generate a {\em
one-sided} proof, providing only the reverse-implication portion of the proof via
input clause deletion.  This can provide useful information---any
assignment that is a model for the compiled representation
must also be a model for the input formula---even when
full validation is impractical.

We incorporated a ring evaluator into the prototype checker.  It can
perform both unweighted and weighted model counting with full precision.
It performs arithmetic over a subset of the rationals we call
$\drational$, consisting of numbers of the form $a \cdot 2^{b} \cdot
5^{c}$, for integers $a$, $b$, and $c$, and with $a$ implemented to
have arbitrary range.  Allowing scaling by powers of 2 enables the
density computation and rescaling required for unweighted model
counting.  Allowing scaling by powers of both 2 and 5 enables exact
decimal arithmetic, handling the weights used in the weighted model
counting competitions.  To give a sense of scale, the counter
generated a result with 260,909 decimal digits for one of the weighted
benchmarks.  Our implementation of arbitrary-range integers represents
a number as a sequence of ``digits'' with each digit ranging from $0$
to $10^9-1$, and with the digits stored as four-byte blocks.  This
allows easy conversion to and from a decimal representation of the
number.

\section{Experimental Evaluation}
\label{sect:experimental}

Our experimental results seek to answer the following questions:
\begin{itemize}
\item How can a hybrid approach for the forward implication proof generation take advantage of the relative strengths of the monolithic and structural approaches?
\item How well does our toolchain perform on actual benchmark problems?
\item How strongly does our toolchain rely on the structure of the POG?
\item How effective are the optimizations presented in Section~\ref{sect:optimization}?
\item How does the verified proof checker perform, relative to the prototype checker?
\item How does our toolchain perform compared to other tools for verifying the results of knowledge compilation and model counting?
\end{itemize}

\subsection{Methodology}
\label{sect:benchmark}

All experiments were run on a 2021 Apple MacBook Pro, with a 3.2~Ghz
Apple M1 processor and 64~GB of RAM\@.  We used a Samsung T7
solid-state disk (SDD) for file storage.  We found that using an SSD was
critical for dealing with the very large proof files (some over 150~GB).

As described in Section~\ref{sect:pog}, we define the size of POG $P$
to be to be the the number of nonterminal nodes plus the number of
edges from these nodes to their children.  This is also equal to the
total number of defining clauses for the POG sum and product
operations.

For benchmark problems, we used the public problems from the
2022 unweighted and weighted model counting competitions.\footnote{Downloaded
from
\url{https://mccompetition.org/2022/mc\_description.html}}
We
found that there were 180 unique CNF files among these, ranging in
size from 250 to 2,753,207 clauses.  With a runtime limit of 4,000
seconds, \dfour{} completed for 123 of the benchmark problems.  Our
proof generator was able to convert all but one of these into POGs,
with their declarations ranging from 304 to 1,765,743,261 (median
774,883) defining clauses.  The additional problem would require
2,761,457,765 defining clauses, and this count overflowed the 32-bit
signed integer we use to represent clause identifiers.

To make some of the experiments more tractable, we also created a
reduced benchmark set, consisting of 90 out of the 123 problems
for which \dfour{} ran in at most 1000 seconds, and the generated POG
had at most $10^7$ defining clauses.  These ranged in size from 304 to
8,493,275 defining clauses, with a median of 378,325.

Over the course of our tool development and evaluation, we have run \dfour{} thousands of times.
Significantly, we have not encountered any case where \dfour{} generated an incorrect result.

We found that computing the \emph{tree ratio} of a POG provides a
useful metric for the degree of sharing among subgraphs.
Formally, define the \emph{tree size} of node $\nodeu$, denoted $\treesize(\nodeu)$, recursively:
\begin{itemize}
\item When $\nodeu$ is a terminal node, $\treesize(\nodeu) = 0$.
\item  When $\nodeu$ is a nonterminal node, with children $\nodeu_1, \nodeu_2, \ldots, \nodeu_k$:
  \begin{eqnarray}
\treesize(\nodeu) & =& k + 1 + \sum_{1 \leq i \leq k} \treesize(\nodeu_i) \label{eqn:treesize}
  \end{eqnarray}
\end{itemize}
A POG $P$ with root node $\noder$ is then defined to have a tree ratio $\treesize(\noder)/|P|$.
The tree size of a POG measures its size if all shared subgraph were expanded such that the graph is transformed into a tree.
The tree ratio then measures the extent of subgraph sharing.
The 122 problems for which POGs were generated
had tree ratios ranging between $1.0$ and $52{,}410$, with a median of
$11.6$.  Considering that the tree size can be exponentially larger
than the POG size, these ratios are fairly modest.

\subsection{Designing a Hybrid Forward-Implication Proof Generator}
\label{sect:experimental:hybrid}

\begin{figure}
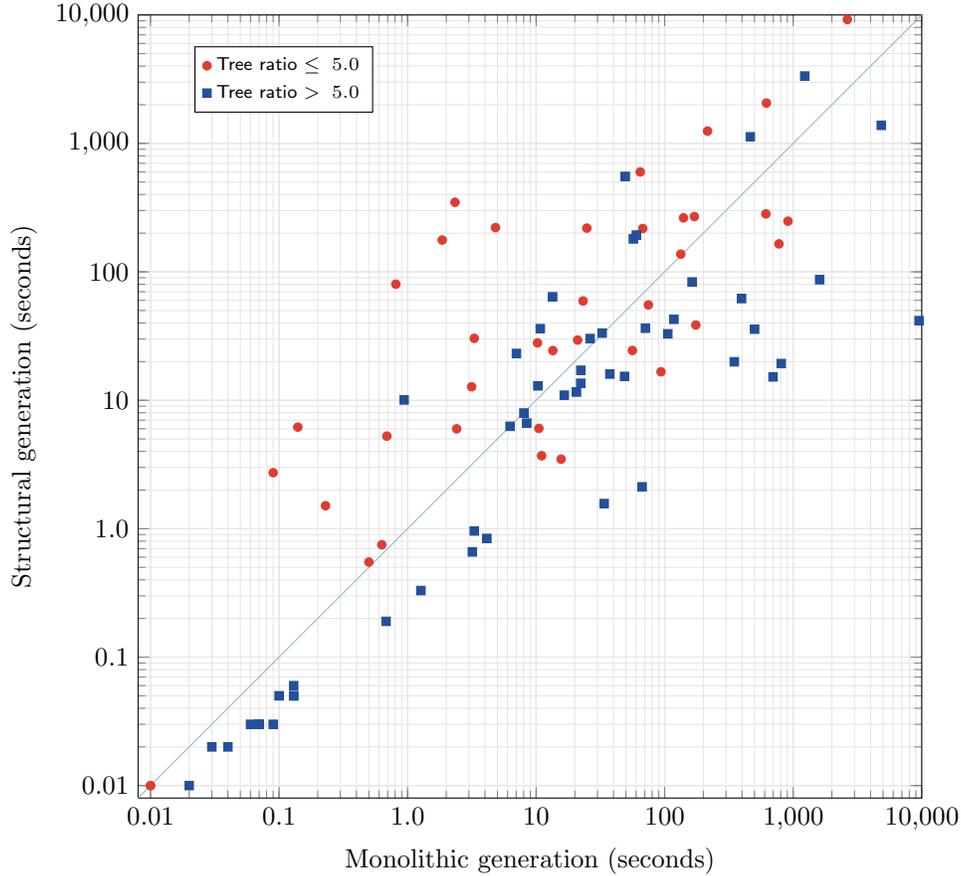

\centering{%
\begin{tikzpicture}%[scale = 0.70]
  \begin{axis}[mark options={scale=0.55},height=12cm,width=12cm,grid=both, grid style={black!10}, ymode=log,
      legend style={at={(0.30,0.96)}},
      legend cell align={left},
                              %x post scale=2.0, y post scale=2.0,
                              xmode=log,xmin=0.008,xmax=10000,
                              xtick={0.01, 0.1,1.0,10,100,1000,10000}, xticklabels={0.01, 0.1, 1.0, 10, 100, {1,000}, {10,000}},
                              ymode=log, ymin=0.008, ymax=10000,
                              ytick={0.01, 0.1,1.0,10,100,1000,10000}, yticklabels={0.01, 0.1, 1.0, 10, 100, {1,000}, {10,000}},
                              xlabel={Monolithic generation (seconds)}, ylabel={Structural generation (seconds)},
            ]

    \input{data-formatted/sub10M-seconds-subthresh}
    \input{data-formatted/sub10M-seconds-supthresh}
    \legend{
      \scriptsize \textsf{Tree ratio $\leq \; 5{.}0$},
      \scriptsize \textsf{Tree ratio $> \; 5{.}0$},
    }
    \addplot[mark=none, color=lightblue] coordinates{(0.001,0.001) (10000,10000)};
%    \addplot[mark=none, color=lightblue] coordinates{(0.01,1000) (10000,10000)};
%    \addplot[mark=none, color=lightblue] coordinates{(1000,0.01) (1000,1000)};
%    \addplot[color=black,dashed] coordinates{(1e-2,1500) (1500,1500)};
%    \addplot[color=black,dashed] coordinates{(1500,1e-2) (1500,1500)};

          \end{axis}
\end{tikzpicture}
} % centering
\caption{Structural (Y axis) versus monolithic (X axis) forward implication proof generation times.  The structural approach generally performed better
for formulas with high tree ratios.}
\label{fig:seconds:mono:structural}
\end{figure}

\begin{figure}
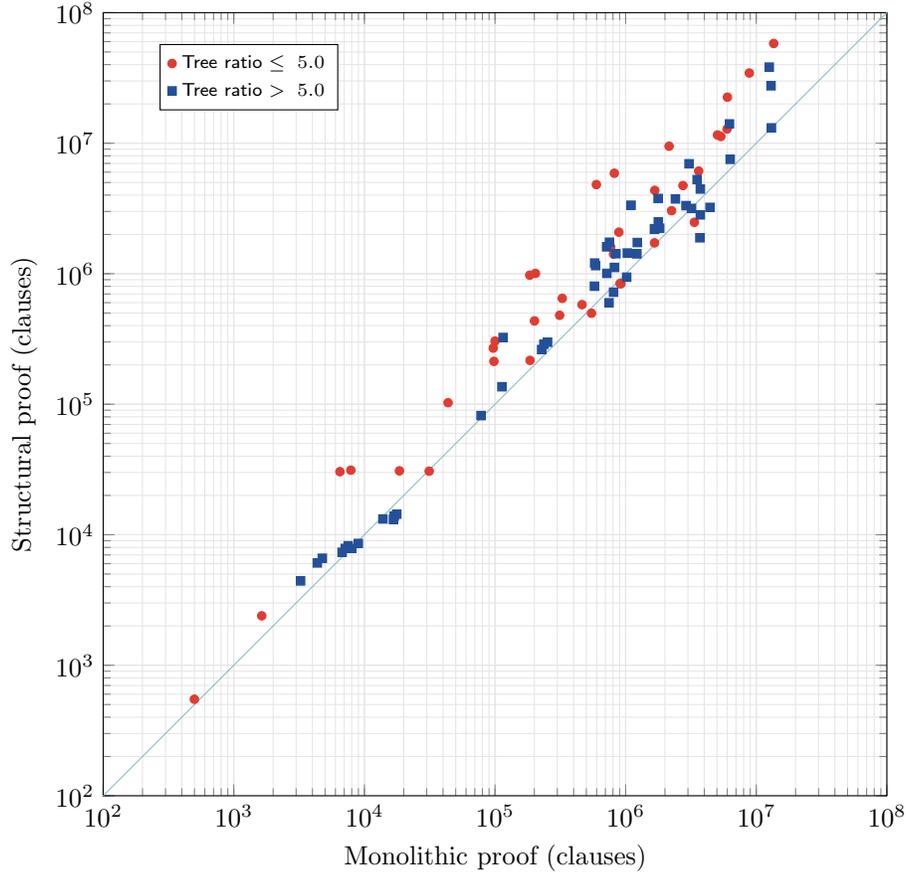

\centering{%
\begin{tikzpicture}%[scale = 0.70]
  \begin{axis}[mark options={scale=0.55},grid=both, height=12cm,width=12cm, grid style={black!10}, ymode=log,
      legend style={at={(0.30,0.96)}},
      legend cell align={left},
                              %x post scale=2.0, y post scale=2.0,
                              xmode=log,xmin=1e2,xmax=1e8,
                              xtick={100,1000, 10000, 100000, 1000000, 10000000, 100000000, 1000000000, 1e10}, 
                              xticklabels={$10^2$,$10^3$,$10^4$,$10^5$,$10^6$,$10^7$,$10^8$,$10^9$, $10^{10}$},
                              ymode=log,ymin=1e2,ymax=1e8,
                              ytick={100,1000, 10000, 100000, 1000000, 10000000, 100000000, 1000000000, 1e10}, 
                              yticklabels={$10^2$,$10^3$,$10^4$,$10^5$,$10^6$,$10^7$,$10^8$,$10^9$, $10^{10}$},
                              xlabel={Monolithic proof (clauses)}, ylabel={Structural proof  (clauses)},
            ]

    \input{data-formatted/sub10M-clauses-subthresh}
    \input{data-formatted/sub10M-clauses-supthresh}
    \legend{
      \scriptsize \textsf{Tree ratio $\leq \; 5{.}0$},
      \scriptsize \textsf{Tree ratio $> \; 5{.}0$},
    }
    \addplot[mark=none, color=lightblue] coordinates{(100,100) (1e10,1e10)};
%    \addplot[mark=none, color=lightblue] coordinates{(100,2e7) (2e7,2e7)};
%    \addplot[mark=none, color=lightblue] coordinates{(2e7,100) (2e7,2e7)};
%    \addplot[color=black,dashed] coordinates{(100,2.5e7) (2.5e7,2.5e7)};
%    \addplot[color=black,dashed] coordinates{(2.5e7,100) (2.5e7,2.5e7)};

          \end{axis}
\end{tikzpicture}
} % centering
\caption{Structural (Y axis) versus monolithic (X axis) proof sizes.  The monolithic approach generated shorter proofs in most cases.}
\label{fig:clauses:mono:structural}
\end{figure}

Our first set of experiments applies full monolithic and full
structural generation to the reduced benchmark set.  Figure
~\ref{fig:seconds:mono:structural} shows a plot comparing the two
approaches.  Each axis shows the number of seconds to generate the
forward implication proof for the POG, with the X axis indicating the monolithic approach and the Y axis indicating the structural approach\@.
Data points to the left of
the diagonal line ran faster with the monolithic method, while those
to the right ran faster with the structural method.
The data are divided into
those having tree ratios below 5.0 and those having tree ratios above 5.0.  Of the 90 problems, 38 are below this tree ratio, and 52 are above.
As can be seen there is some correlation between the relative performance of the two approaches and the tree ratio.
For the 90 problems:
\begin{itemize}
\item For those with tree ratios below 5.0, 26 ran faster with monolithic generation, 11 with structural, and 1 tied.
\item For those with tree ratios above 5.0, 12 ran faster with monolithic generation, 39 with structural, and 1 tied.
\end{itemize}

Figure~\ref{fig:clauses:mono:structural} shows the comparative proof
sizes (in clauses) for the two approaches.  As can be seen, the
monolithic approach tends to generate shorter proofs.  For the 90
problems, 72 had smaller proofs with monolithic generation and 18 with
structural.  There is little correlation between
the relative proof sizes and the tree ratios.

Based on the results for the reduced benchmark set, we devised the following selection rule: when the
tree ratio for the POG is at most $5.0$, use the monolithic approach, otherwise
use the structural approach.  That would yield the better choice, in terms of runtime, for
65 of the 90 cases.
Our data set was too sparse to do more tuning, including a more refined threshold selection.

We tried a variety of hybrid approaches, where the proof generator starts at the
top using a structural approach and then switches to a monolithic
approach once the tree size for a node drops below some threshold.
This was helpful for very large problems, but setting a low threshold (tree size less than $10^6$) consistently led to poorer runtime performance.
We also found that the SAT solver could not reliably handle problems with more than $10^7$ clauses.
We therefore refined the rule for a hybrid approach that operates as follows:
\begin{enumerate}
\item With a bottom-up traversal of the graph, label each node by its tree size.
\item Compute the total size of the graph and the tree ratio of the root.
\item Proceed with proof generation with the following rules
  \begin{enumerate}
  \item If the tree ratio is at most 5.0, and the tree size of the root is below $10^6$, do the entire proof generation with a monolithic approach
  \item If the tree ratio is at most 5.0, and the tree size of the root is above $10^6$, start with a structural approach and shift to a monolithic approach once the tree size at a node is below $10^6$.
  \item If the tree ratio is above 5.0, do the entire proof generation
    with a structural approach.
  \end{enumerate}
\end{enumerate}
Unless noted otherwise, the remainder of our experimental data is based on this approach.

\subsection{Toolchain Performance Evaluation}
\label{sect:experimental:performance}

\begin{figure}
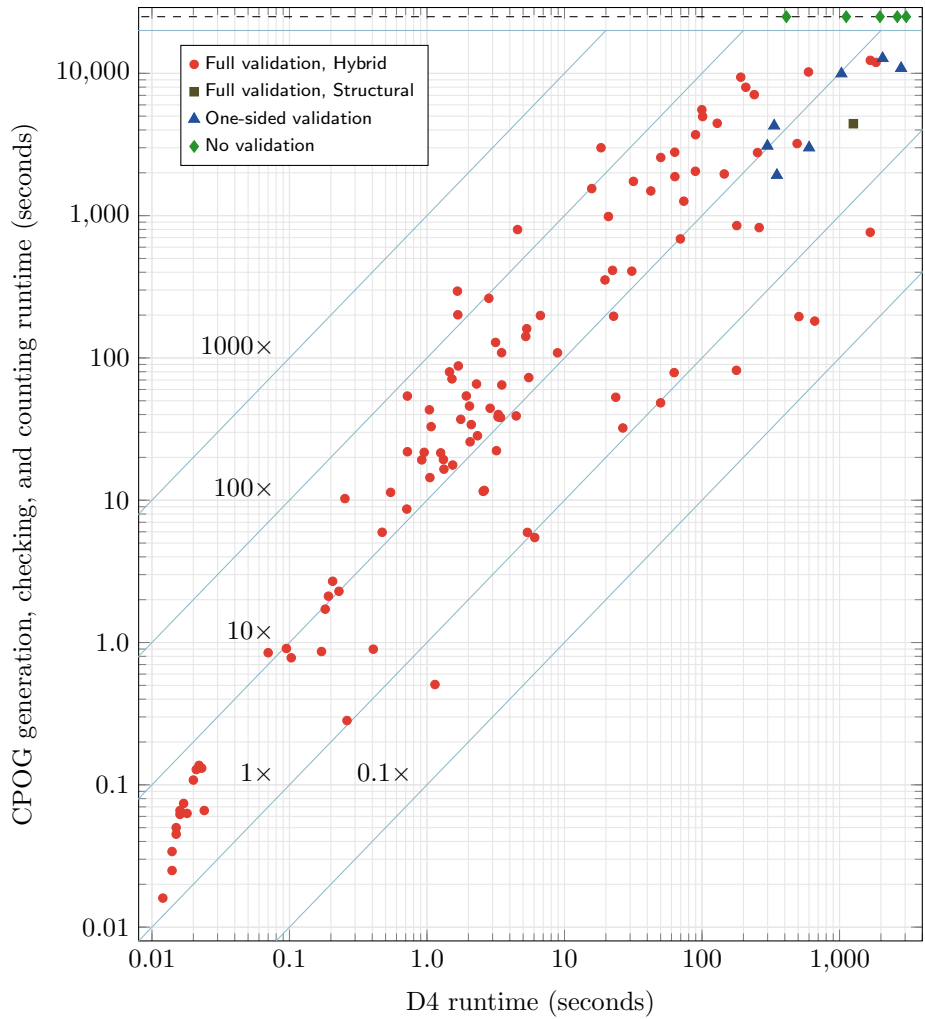

\centering{%
\begin{tikzpicture}%[scale = 0.70]
  \begin{axis}[mark options={scale=0.55},height=14cm,width=12cm,grid=both, grid style={black!10}, ymode=log,
      legend style={at={(0.37,0.96)}},
      legend cell align={left},
                              %x post scale=1.8, y post scale=2.5,
                              xmode=log,
                              xmin=0.008,xmax=4000,
                              xtick={0.01, 0.1,1.0,10,100,1000,10000}, xticklabels={0.01, 0.1, 1.0, 10, 100, {1,000}, {10,000}},
                              ymin=0.008, ymax=29000,
                              ytick={0.01, 0.1,1.0,10,100,1000,10000,100000}, yticklabels={0.01, 0.1, 1.0, 10, 100, {1,000},{10,000},{100,000}},
                              xlabel={D4 runtime (seconds)}, ylabel={CPOG generation, checking, and counting runtime (seconds)},
            ]
    \input{data-formatted/seconds-summary}
    \input{data-formatted/seconds-summary-structured-only}
    \input{data-formatted/seconds-summary-onesided}
    \input{data-formatted/seconds-summary-failures}
    \legend{
      \scriptsize \textsf{Full validation, Hybrid},
      \scriptsize \textsf{Full validation, Structural},
      \scriptsize \textsf{One-sided validation},
      \scriptsize \textsf{No validation},
    }
    \addplot[mark=none, dashed] coordinates{(0.001,25000) (4000, 25000)};) 
    \addplot[mark=none, color=lightblue] coordinates{(0.001,20000) (4000,20000)};
    \addplot[mark=none, color=lightblue] coordinates{(0.01,0.001) (10000.0,1000.0)};
    \addplot[mark=none, color=lightblue] coordinates{(0.001,0.001) (10000.0,10000.0)};
    \addplot[mark=none, color=lightblue] coordinates{(0.001,0.01) (2000,20000)};
    \addplot[mark=none, color=lightblue] coordinates{(0.001,0.10) (200, 20000)};
    \addplot[mark=none, color=lightblue] coordinates{(0.001,1.0) (20, 20000)};
    \node[left] at (axis cs: 0.9,0.12) {$0{.}1\times$};
    \node[left] at (axis cs: 0.09,0.12) {$1\times$};
    \node[left] at (axis cs: 0.09,1.2) {$10\times$};
    \node[left] at (axis cs: 0.09,12.0) {$100\times$};
    \node[left] at (axis cs: 0.09,120.0) {$1000\times$};
       \end{axis}

\end{tikzpicture}
} % centering
\caption{Combined runtime for CPOG proof generation, checking, and counting  as function of D4 runtime.  Timeouts are shown as points on the dashed line.
Full verification completed for 111 of the 123 benchmark problems.  The median ratio between the two times for the completed problems was $12.5$.}
\label{fig:d4:cpog}
\end{figure}

Figure~\ref{fig:d4:cpog} shows the performance of our toolchain for
the 123 problems for which \dfour{} completed within 4,000
seconds.  This figure shows the runtime for \dfour{} on the X axis and the runtime for the toolchain on the Y axis.
The toolchain included proof generation, proof checking with
the prototype checker, and counting computation.  The counting
computation included unweighted model counting for each problem, plus
weighted model counting for those from the weighted model counting
competition.  We allowed a maximum of 10,000 seconds for the
toolchain.  For those problems that failed to complete within the time
limit, we attempted other approaches.  For those with low tree ratios,
we attempted using a full structural approach.  For those where we
could not obtain a complete proof, we attempted a one-sided proof,
generating only the reverse implication proof.  The results can be summarized as follows:
\begin{itemize}
\item Of the 123 problems, 110 were completed using the hybrid approach.
\item One additional problem completed with the structural approach (as well as by using a hybrid approach with the tree size limit set to $10^5$.)
\item For seven others, we were able to generate and check a one-sided proof.
\item For five problems, no form of validation succeeded.  This included the one for which the POG was too large to encode the clause identifiers.
\end{itemize}
As one might expect, the largest problems proved to be the most
challenging.  Of the four with more than $10^9$ defining clauses, one
completed with a one-sided proof, while the other three had no form of
validation.

Figure~\ref{fig:d4:cpog} also allows comparing the time to validate
the output of the knowledge compiler relative to the time for the
compiler itself.  (The counting computations had negligible impact on
the overall toolchain performance.)  For the 111 problems for which
full proofs were generated and checked, the ratio between these two
times ranged between $0.27$ (i.e., validation was $3.64\times$ faster
than generation) and $177.0$, with a median of $12.5$.  The ones with
very high ratios tended to be ones with very few models, and so most
of the proof generation time was spent generating unsatisfiability
proofs.

It is encouraging that we could validate the results of a knowledge
compiler for all but the largest problems.  Nonetheless,
the high ratio between our toolchain time and the time required by the
compiler indicates that validation comes at a significant cost.  By
contrast, modern SAT solvers incur only a small performance penalty
when generating proofs of unsatisfiability~\cite{heule:fmcad:2013}.
With the advent of solvers that also generate hints for the proof
steps~\cite{biere:sat:2023}, the proof checking overhead has also
become very small.

\begin{figure}
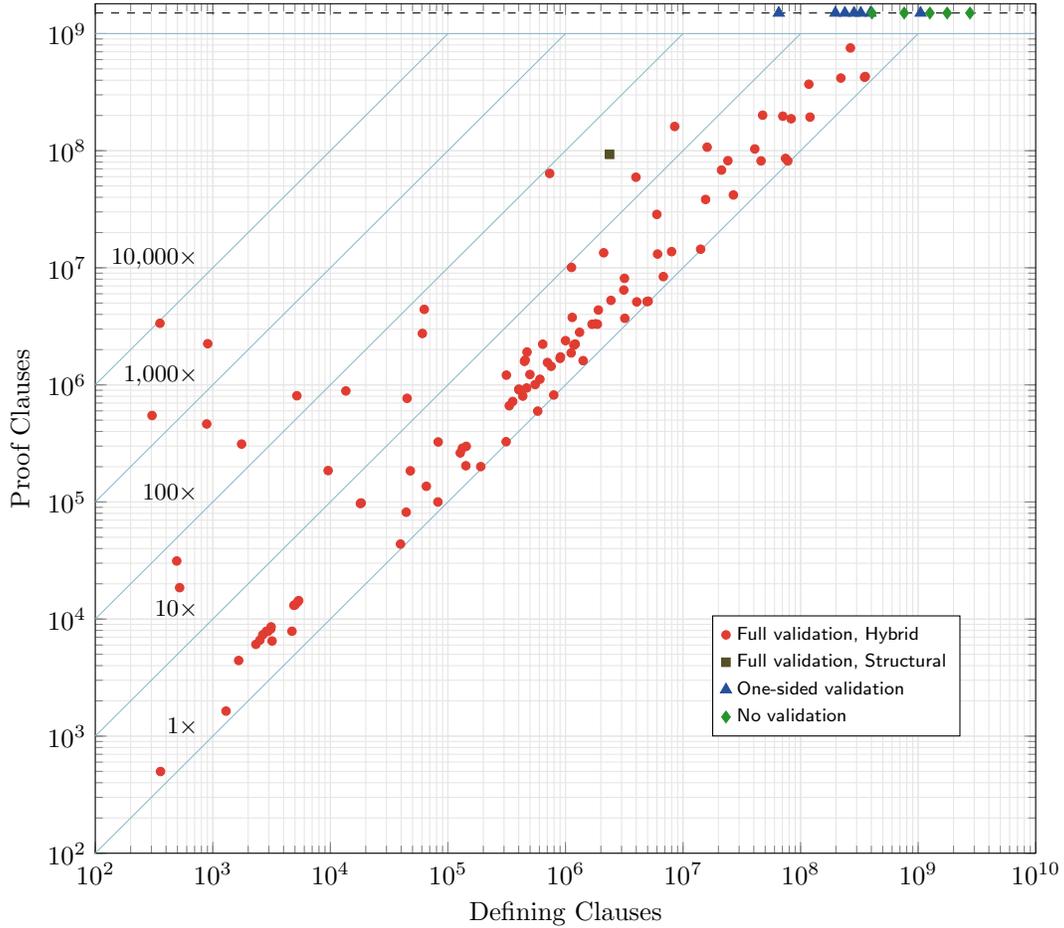

\centering{%
\begin{tikzpicture}%[scale = 0.70]
  \begin{axis}[mark options={scale=0.55},height=11cm,width=12cm,grid=both, grid style={black!10},
      legend style={at={(0.92,0.28)}},
      legend cell align={left},
                              x post scale=1.2, y post scale=1.2,
                              xmode=log,xmin=100,xmax=1e10, 
                              xtick={100,1000, 10000, 100000, 1000000, 10000000, 100000000, 1000000000, 1e10}, 
                              xticklabels={$10^2$,$10^3$,$10^4$,$10^5$,$10^6$,$10^7$,$10^8$,$10^9$, $10^{10}$},
                              ymode=log, ymin=100, ymax=1.8e9,
                              ytick={100,1000, 10000, 100000, 1000000, 10000000, 100000000, 1000000000},
                              yticklabels={$10^2$,$10^3$,$10^4$,$10^5$,$10^6$,$10^7$,$10^8$,$10^9$},
                              xlabel={Defining Clauses}, ylabel={Proof Clauses},
            ]
    \input{data-formatted/clauses-summary}
    \input{data-formatted/clauses-summary-structured-only}
    \input{data-formatted/clauses-summary-onesided}
    \input{data-formatted/clauses-summary-failures}

    \legend{
      \scriptsize \textsf{Full validation, Hybrid},
      \scriptsize \textsf{Full validation, Structural},
      \scriptsize \textsf{One-sided validation},
      \scriptsize \textsf{No validation},
    }
    \addplot[mark=none, color=lightblue] coordinates{(100,100) (1e9,1e9)};
    \addplot[mark=none, color=lightblue] coordinates{(100,1000) (1e8,1e9)};
    \addplot[mark=none, color=lightblue] coordinates{(100,10000) (1e7,1e9)};
    \addplot[mark=none, color=lightblue] coordinates{(100,100000) (1e6,1e9)};
    \addplot[mark=none, color=lightblue] coordinates{(100,1000000) (1e5,1e9)};
    \addplot[mark=none, color=lightblue] coordinates{(100,1e9) (1e10,1e9)};
    \addplot[color=black,dashed] coordinates{(1e2,1.5e9) (1e10,1.5e9)};
    \node[left] at (axis cs: 900,1200) {\small $1\times$};
    \node[left] at (axis cs: 900,12000) {\small $10\times$};
    \node[left] at (axis cs: 900,120000) {\small $100\times$};
    \node[left] at (axis cs: 900,1200000) {\small $1{,}000\times$};
    \node[left] at (axis cs: 900,12000000) {\small $10{,}000\times$};

          \end{axis}
\end{tikzpicture}
} % centering
\caption{Total number of clauses in CPOG file as function of number of defining clauses.  The median ratio of the two was $2.29$.}
\label{fig:defining:total}
\end{figure}

Figure~\ref{fig:defining:total} compares the total number of clauses
in the CPOG representation (Y axis) versus the number of defining
clauses (X axis).  Since the former include the latter, the ratio
between these cannot be less than $1.0$.  The ratios ranged between
$1.02\times$ and $9460.2\times$.  Again, the largest ratios were for
problems with very few models, and hence most of the steps were for
the unsatisfiability proofs in the literal justifications.  The median
ratio was $2.29\times$.  This is a relatively modest overhead,
although it requires transforming the large dec-DNNF files into  even larger CPOG files.

\subsection{Toolchain Robustness Evaluation}
\label{sect:experimental:robustness}

Although the CPOG framework is very general and makes no assumptions
about how the the POG relates to the input CNF formula, our proof
generator is less general.  It requires that the POG arise from a
dec-DNNF graph.  Moreover, our structural approach requires that the
CNF formula decompose according to the dec-DNNF structure.  That is,
as it recurses downward, the simplified clauses must be encoded by the
POG subgraphs.

Our monolithic approach, on the other hand, makes no assumption about
the relation between the POG and the CNF formula.  As long as every
satisfying assignment to the CNF would, when extended, cause the POG
root to evaluate to 1, the monolithic approach can, in principle, generate a forward
implication proof.  Our reverse implication proof generation is also
independent of any structural relations between the two
representations.

We tested this hypothesis by using the preprocessing capabilities of
\dfour{} to transform the input formula into a different, but
logically equivalent clausal representation.  \dfour{} can optionally
perform three different forms of
preprocessing~\cite{lagniez:aaai:2014}.  These are designed to make
knowledge compilation more efficient, but they also have the effect of
creating a mismatch between the structure of the generated dec-DNNF
graph and the original input formula.

We used the 90 problems from the reduced benchmark set as test cases,
running \dfour{} by preprocessing with all three methods enabled
(these are referred to as ``backbone,'' ``vivification,'' and
``occElimination'') followed by knowledge compilation.  None of the
resulting POGs could be verified using the structural approach.
Setting an overall time limit of 1000 seconds for the combination of
\dfour{} (including preprocessing), proof generation, proof checking,
and counting, and using monolithic proof generation, we obtained the following results:
\begin{itemize}
\item For 7 problems, neither approach completed within 1000 seconds.
\item For 2 problems, running with preprocessing completed within the time limit, while running without did not.
\item For 1 problem, running without preprocessing completed within the time limit, while running with did not.
\item For 45 problems, both completed, with the preprocessing version running faster.
\item for 35 problems, both completed, with the preprocessing version running slower.
\end{itemize}

These results indicate that the preprocessing is only marginally
effective.  Importantly, however, they demonstrate
that our toolchain can establish the end-to-end
correctness of preprocessing plus knowledge compilation.

Even with monolithic mode, our proof generator still requires that the output of the knowledge compiler be
a dec-DNNF graph.  We discuss how it could be
generalized even further in Section~\ref{sect:extend:pog}.

\subsection{Effect of Optimizations}
\label{sect:experimental:optimize}
\begin{figure}
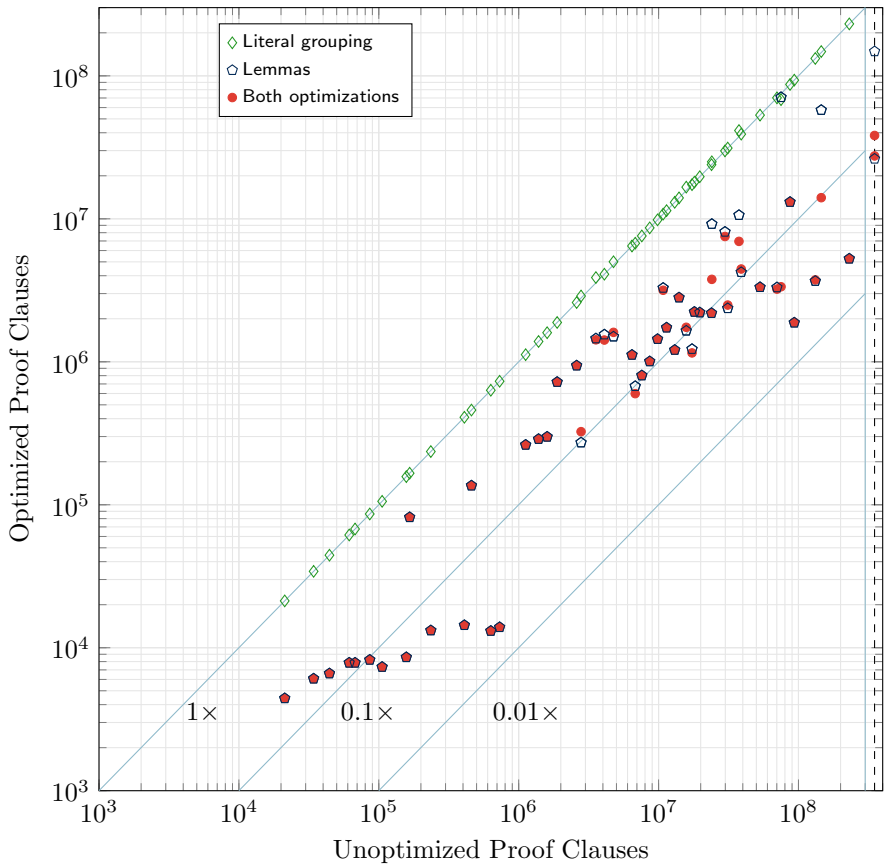


\begin{center}
\begin{tikzpicture}%[scale = 0.80]
  \begin{axis}[mark options={scale=0.55},grid=both, height=12cm,width=12cm, grid style={black!10}, 
      legend style={at={(0.4,0.98)}},
      legend cell align={left},
                              %x post scale=1.8, y post scale=2.0,
                              xmode=log,xmin=1e3,xmax=4e8, 
                              xtick={1000, 10000, 100000, 1000000, 10000000, 1e8, 1e9}, xticklabels={$10^3$,$10^4$,$10^5$,$10^6$,$10^7$, $10^8$, $10^9$},
                              xlabel={Unoptimized Proof Clauses},
                              ymode=log, ymin=1e3, ymax=3e8, 
                              ytick={1000, 10000, 100000, 1000000, 10000000, 1e8, 1e9}, yticklabels={$10^3$,$10^4$,$10^5$,$10^6$,$10^7$,$10^8$, $10^9$},
                              ylabel={Optimized Proof Clauses},
            ]
    \input{data-formatted/clauses-unoptimized-group}
    \input{data-formatted/clauses-unoptimized-lemma}
    \input{data-formatted/clauses-unoptimized-optimized}    
    \legend{
      \scriptsize \textsf{Literal grouping},
      \scriptsize \textsf{Lemmas},
      \scriptsize \textsf{Both optimizations},
    }
    \input{data-formatted/clauses-failures-unoptimized-optimized}    
    \input{data-formatted/clauses-failures-unoptimized-lemma}
    \input{data-formatted/clauses-unoptimized-lemma}
    \addplot[mark=none, color=lightblue] coordinates{(1000,1000) (3e8,3e8)};
    \addplot[mark=none, color=lightblue] coordinates{(10000,1000) (3e8,3e7)};
    \addplot[mark=none, color=lightblue] coordinates{(100000,1000) (3e8,3e6)};

    \node[right] at (axis cs: 3.6e3,3.5e3) {$1\times$};
    \node[right] at (axis cs: 4.6e4,3.5e3) {$0.1\times$};
    \node[right] at (axis cs: 5.6e5,3.5e3) {$0.01\times$};

    \addplot[mark=none, color=lightblue] coordinates{(3e8,1e3) (3e8,3e8)};
    \addplot[color=black,dashed] coordinates{(3.5e8,1e3) (3.5e8,3e8)};

          \end{axis}
\end{tikzpicture}
\end{center}
\caption{Proof clauses when one or both optimizations is enabled, versus without optimization.  Lemmas provide substantial benefit, while the results for literal grouping are mixed.}
\label{fig:optimized:lessoptimized}
\end{figure}

Section~\ref{sect:optimization} describes two optimizations for proof
generation: literal grouping and lemmas.  These optimizations are only
applied when using a structural approach, and so we focus our evaluation
on the 52 problems having tree ratios greater than 5.0 from the reduced benchmark set of 90 problems.

Figure~\ref{fig:optimized:lessoptimized} summarizes the sizes of the
CPOG representations generated for these problems with and without the
optimizations.  The X axis shows the size (in clauses) for the proof
when neither optimization is enabled, while the Y axis shows the sizes
with either one or both enabled.  The extent to which a point lies
below the diagonal line labeled ``$1\times$'' therefore indicates the
benefit of the optimizations.  Two benchmarks required lemmas to
complete.  These are indicated along the far edge of the X axis.  In
the remaining, we consider mostly the 50 benchmarks for which all four variants completed.

Literal grouping alone (the hollow diamonds clustered along the
diagonal line), has only minimal benefit.  Compared to the unoptimized
proof sizes, literal grouping yielded proofs that ranged between being
$1.10\times$ larger and $1.10\times$ smaller, with a median ratio of
$1.0$.  Although literal grouping reduces
the number of unsatisfiability proofs that must be generated, the
resulting proofs are enough larger to offset this advantage.

Using lemmas alone (the hollow pentagons), on the other hand, shows
significant benefit.  The resulting proofs were between $1.06\times$
and $52.54\times$ smaller, with a median of $7.95\times$.  In addition,
lemmas enable two benchmarks to complete that otherwise fail.
These
problems have high degrees of subgraph sharing, and so the ability to
avoid expanding the proofs into tree structures was important.

Combining literal grouping with lemmas (the solid dots)
showed a modest improvement over using lemmas alone.
Many of the solid dots coincide with or are very
close to the hollow pentagons, with some being slightly better and others begin slightly worse.
Significantly, however, several problems showed major benefit from combining the two
optimizations.  In the most extreme case, one problem had between 68
and 75 million proof clauses with either no or a single optimization, but just 3.3 million with both optimizations.
Compared to the unoptimized proofs, the combination
yielded proofs ranging from $2.03\times$ to $52.54\times$
smaller, with a median of $8.59\times$.

The runtime improvement with the optimizations was smaller than the size improvement, but still significant.
Generating shorter proofs enables the checker to run faster, and so
there is some benefit in spending more time in proof generation to
reduce the proof size.  We therefore consider the combined time to
generate and to check the proofs.  Literal grouping, on its own,
caused the toolchain to run with a range from $4.02\times$ slower to
$1.24\times$ faster, with a median slowdown of $1.70\times$, compared
to no optimization.  Lemmas, on their own, yielded speedups ranging
from $1.03\times$ to $18.04\times$, with a median of $2.78\times$.
Combining the two yielded peformances ranging from a slowdown of
$1.23\times$ to a speedup of $23.01\times$, with a median speedup of
$3.02\times$.

Overall, these results indicate lemmas provide an important optimization,
while literal grouping provides a modest benefit.

\subsection{Performance of the Formally Verified Proof Checker}
\label{sect:experimental:verified-checker}

\begin{figure}
\centering{%
\begin{tikzpicture}%[scale = 0.70]
  \begin{axis}[mark options={scale=0.55},height=12cm,width=12cm,grid=both, grid style={black!10},
      legend style={at={(0.22,0.97)}},
      legend cell align={left},
                              %x post scale=1.85, y post scale=2.2,
                              xmode=log, xmin=0.0008, xmax=10000,
                              xtick={0.001, 0.01, 0.1,1.0,10,100,1000,10000}, xticklabels={0.001, 0.01, 0.1, 1.0, 10, 100, {1,000}, {10,000}},
                              ymode=log, ymin=0.0008, ymax=10000,
                              ytick={0.001, 0.01, 0.1,1.0,10,100,1000,10000}, yticklabels={0.001, 0.01, 0.1, 1.0, 10, 100, {1,000}, {10,000}},
                              xlabel={Prototype checker (seconds)}, ylabel={Verified checker (seconds)},
            ]
    \input{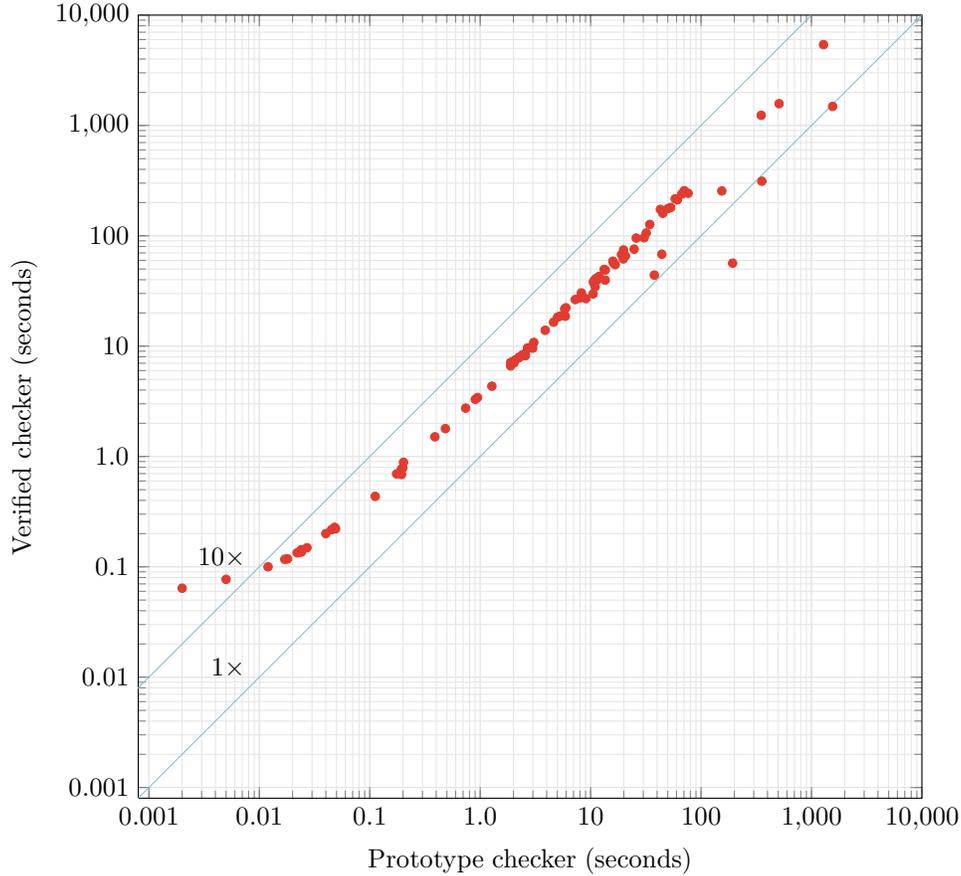}
    \addplot[mark=none, color=lightblue] coordinates{(0.0001,0.0001) (10000,10000)};
    \addplot[mark=none, color=lightblue] coordinates{(0.0001,0.001) (1000,10000)};
    \node[left] at (axis cs: 0.009,0.012) {$1\times$};
    \node[left] at (axis cs: 0.009,0.12) {$10\times$};

          \end{axis}
\end{tikzpicture}
} % centering
\caption{Times for Verified Checker versus Prototype Checker.  Both show similar scaling.}
\label{fig:lean:compare}
\end{figure}

Our prototype proof checker is fairly simple and has shown itself to
be reliable, but we have not subjected it to rigorous, adversarial
testing.  Using our verified checker removes any  doubt about
the trustworthiness of the compiled result.  For the 90 problems
from the reduced set, we generated CPOG files using the hybrid
approach and ran both checkers.  Figure~\ref{fig:lean:compare}
summarizes the results, with the runtime for the prototype checker on
the X axis and for the verified checker on the Y axis.

We can see in this figure that the verified checker has a startup time
of around 70 milliseconds, causing it to run much slower compared to
the prototype checker on the very small problems.  If we consider only the 76 problems requiring
more than 0.1 seconds with the prototype checker, we see that the
verified checker runs between $3.42\times$ faster and $4.39\times$ slower
than the prototype, with a median of $3.54\times$ slower.

Significantly, the relative performance
remains constant even for the larger proofs, showing that the two
programs have similar scaling properties.

\subsection{Comparison to Other Validation Frameworks}
\label{sect:experimental:comparisons}

As described in Section~\ref{sect:related}, two other verification
frameworks have been developed that are relevant to ours: the
\cdfour{} framework~\cite{capelli:sat:2019,capelli:aaai:2021},
designed for the \dfour{} knowledge compiler, and the \mice{}
framework~\cite{fichte:sat:2022}, designed to verify unweighted model
counters.  Here we compare how they perform on all 123 problems in the full benchmark set.

\begin{figure}
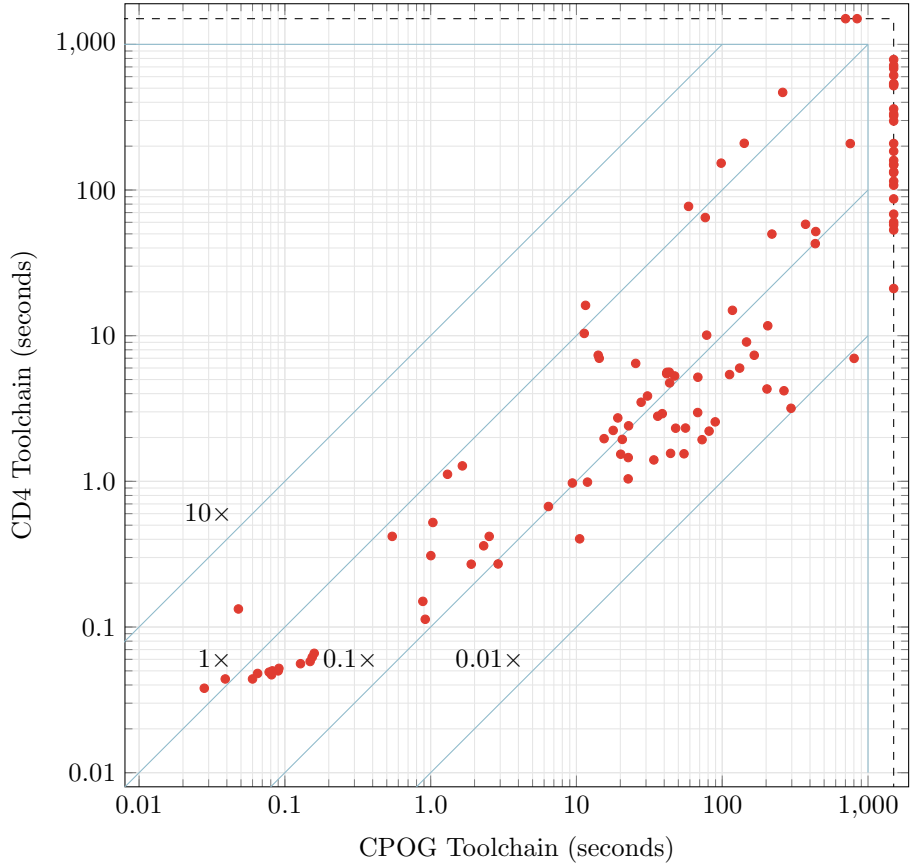

\centering{%
\begin{tikzpicture}%[scale = 0.70]
  \begin{axis}[mark options={scale=0.55},height=12cm,width=12cm,grid=both, grid style={black!10}, ymode=log,
      legend style={at={(0.22,0.98)}},
      legend cell align={left},
                              %x post scale=1.8, y post scale=2.2,
                              xmode=log,xmin=0.008,xmax=1900,
                              xtick={0.01, 0.1,1.0,10,100,1000}, xticklabels={0.01, 0.1, 1.0, 10, 100,{1,000}},
                              ymin=0.008, ymax=1900,
                              ytick={0.01, 0.1,1.0,10,100,1000}, yticklabels={0.01, 0.1, 1.0, 10, 100, {1,000}},
                              xlabel={CPOG Toolchain (seconds)}, ylabel={CD4 Toolchain (seconds)},
            ]
    \input{data-formatted/time-toolchain-compare-cd4-timeout}
    \input{data-formatted/time-toolchain-compare-pog-fail}
    \input{data-formatted/time-toolchain-compare}
    \addplot[mark=none, color=lightblue] coordinates{(0.001,0.00001) (1000.0,10.0)};
    \addplot[mark=none, color=lightblue] coordinates{(0.001,0.0001) (1000.0,100.0)};
    \addplot[mark=none, color=lightblue] coordinates{(0.001,0.001) (1000.0,1000.0)};
    \addplot[mark=none, color=lightblue] coordinates{(0.001,0.01) (100, 1000)};
%%    \addplot[mark=none, color=lightblue] coordinates{(0.01,1.0) (10, 1000)};
    \addplot[color=black,dashed] coordinates{(0.001,1500) (1500,1500)};
    \addplot[color=black,dashed] coordinates{(1500,0.001) (1500,1500)};
    \addplot[mark=none, color=lightblue] coordinates{(0.001,1000) (1000,1000)};
    \addplot[mark=none, color=lightblue] coordinates{(1000,0.01) (1000,1000)};
    \node[left] at (axis cs: 5.0,0.06) {$0{.}01\times$};
    \node[left] at (axis cs: 0.5,0.06) {$0{.}1\times$};
    \node[left] at (axis cs: 0.05,0.06) {$1\times$};
    \node[left] at (axis cs: 0.05,0.6) {$10\times$};

          \end{axis}
\end{tikzpicture}
} % centering
\caption{Times for CD4 Toolchain versus CPOG Toolchain.  Times include knowledge compilation, proof generation, and checking.  CD4 generally scales better.}
\label{fig:cd4:cpog}
\end{figure}

Running \cdfour{} involves running \dfour{} with appropriate
arguments.\footnote{This is possible with the original version of \dfour{}, available 
at \url{https://github.com/crillab/d4}.  It was not incorporated into the more recent version, available at
\url{https://github.com/crillab/d4v2}.}  Checking the results requires running two checkers: one
for the annotated dec-DNNF graph, plus \dtrim{} for the generated proof
clauses.  The first checker is not available in any public repository.  We used a copy supplied to us by the authors.
The combined toolchain therefore involves running the
knowledge compiler and the two proof checkers.  For comparison, we
consider the time for our complete toolchain, including running
\dfour{}, the proof generator, and the prototype proof checker.  For
both toolchains, we set a time limit of 1,000 seconds.  We ran both
toolchains for all 123 problems.

Figure~\ref{fig:cd4:cpog} compares times for the toolchains, with
those for our toolchain on the X axis and those for the \cdfour{}
toolchain on the Y axis.
The results can be summarized as follows:
\begin{itemize}
\item Both toolchains completed for 82 problems, with 8 running faster with our toolchain and 74 running faster with the \cdfour{} toolchain.
  Overall, our toolchain ranged from $2.77\times$ faster to $114.91\times$ slower, with a median of running $7.81\times$ slower.
\item Our toolchain completed 2 problems for which the \cdfour{} toolchain did not complete within 1000 seconds.
\item The \cdfour{} toolchain completed 26 problems for which our toolchain did not complete within 1000 seconds.
\item Neither toolchain completed for 13 problems.
\end{itemize}
Clearly, \cdfour{} has better overall scaling and
performance.  Even with a time limit of 1000 seconds, it was able to handle all but 15 of our 123 problems.

The \cdfour{} toolchain has impressive performance, but as a general tool it has significant shortcomings.
It relies strongly on the inner
workings of the knowledge compiler.  It cannot even verify its own
output when preprocessing is enabled.  Furthermore, even having corrected
the known flaw, there is no guarantee that their framework is sound or
that their checker is correct.

\begin{figure}
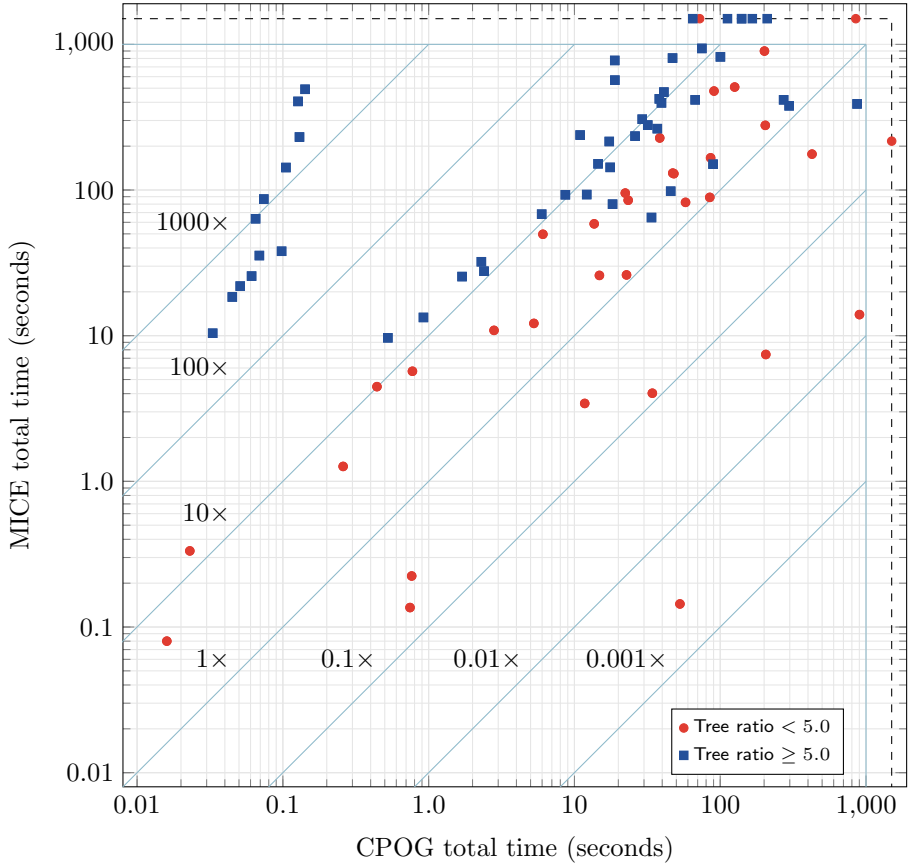

\centering{%
\begin{tikzpicture}%[scale = 0.70]
  \begin{axis}[mark options={scale=0.55},height=12cm,width=12cm,grid=both, grid style={black!10},
      legend style={at={(0.92,0.10)}},
      legend cell align={left},
                              %x post scale=1.8, y post scale=2.2,
                              xmode=log,xmin=0.008,xmax=1900,
                              xtick={0.01, 0.1,1.0,10,100,1000}, xticklabels={0.01, 0.1, 1.0, 10, 100,{1,000}},
                              ymode=log, ymin=0.008, ymax=1900,
                              ytick={0.01, 0.1,1.0,10,100,1000}, yticklabels={0.01, 0.1, 1.0, 10, 100, {1,000}},
                              xlabel={CPOG total time (seconds)}, ylabel={MICE total time (seconds)},
            ]
    \input{data-formatted/time-mice-compare-subthresh}
    \input{data-formatted/time-mice-compare-supthresh}
    \legend{
      \scriptsize \textsf{Tree ratio $< 5.0$},
      \scriptsize \textsf{Tree ratio $\geq 5.0$},
    }
    \input{data-formatted/time-mice-compare-subthresh}
    \input{data-formatted/time-mice-timeout-subthresh}
    \input{data-formatted/time-pog-timeout-subthresh}
    \input{data-formatted/time-mice-compare-supthresh}
    \input{data-formatted/time-mice-timeout-supthresh}

    \addplot[color=black,dashed] coordinates{(0.001,1500) (1500,1500)};
    \addplot[color=black,dashed] coordinates{(1500,0.001) (1500,1500)};
    \addplot[mark=none, color=lightblue] coordinates{(0.001,1000) (1000,1000)};
    \addplot[mark=none, color=lightblue] coordinates{(1000,0.001) (1000,1000)};

%    \addplot[mark=none, color=black] coordinates{(0.01,1500) (1500,1500)};
%    \addplot[mark=none, color=black] coordinates{(1500,0.01) (1500,1500)};
    \addplot[mark=none, color=lightblue] coordinates{(1.0,0.001) (1000.0,1.0)};
    \addplot[mark=none, color=lightblue] coordinates{(0.1,0.001) (1000.0,10.0)};
    \addplot[mark=none, color=lightblue] coordinates{(0.01,0.001) (1000.0,100.0)};
    \addplot[mark=none, color=lightblue] coordinates{(0.001,0.001) (1000.0,1000.0)};
    \addplot[mark=none, color=lightblue] coordinates{(0.001,0.01) (100,1000)};
    \addplot[mark=none, color=lightblue] coordinates{(0.001,0.1) (10,1000)};
    \addplot[mark=none, color=lightblue] coordinates{(0.001,1.0) (1,1000)};
    \node[left] at (axis cs: 50,0.06) {$0{.}001\times$};
    \node[left] at (axis cs: 5.0,0.06) {$0{.}01\times$};
    \node[left] at (axis cs: 0.5,0.06) {$0{.}1\times$};
    \node[left] at (axis cs: 0.05,0.06) {$1\times$};
    \node[left] at (axis cs: 0.05,0.6) {$10\times$};
    \node[left] at (axis cs: 0.05,6.0) {$100\times$};
    \node[left] at (axis cs: 0.05,60.0) {$1000\times$};

          \end{axis}
\end{tikzpicture}
} % centering
\caption{Running Time for MICE versus our proof chains.  Times include proof generation, checking, and counting.  Timeouts are shown as points on the dashed lines.
MICE is especially weak on problems with high tree ratios.}
\label{fig:mice}
\end{figure}

Running \mice{} on the output of a knowledge compiler requires running
two programs: \progname{nnf2trace}, a proof generator for dec-DNNF
graphs, and \progname{sharptrace}, a checker for the generated proofs.\footnote{Both programs were downloaded from \url{https://github.com/vroland}.}

The results for the reduced set of 90 problems is shown in
Figure~\ref{fig:mice}, comparing the time to generate and check
the proofs with our framework on the X axis, and the time to do so
with the \mice{} tools on the Y axis.  Both were set to have a time limit
of 1000 seconds.  
The results can be summarized as follows:
\begin{itemize}
\item Both toolchains completed for 75 problems, with 66 running faster with our toolchain and 9 running faster with the \mice{} toolchain.
  Overall, our toolchain ranged from $3461\times$ faster to $368\times$ slower, with a median of running $7.67\times$ faster.
\item Our toolchain completed 7 problems for which the \mice{} toolchain did not complete within 1000 seconds.
\item The \mice{} toolchain completed 1 problem for which our toolchain did not complete within 1000 seconds.
\item Neither toolchain completed for 7 problems.
\end{itemize}

One shortcoming of the \mice{} framework is highlighted by the
division of the data points in Figure~\ref{fig:mice} according to
tree ratios.  Those with tree ratios above 5.0 consistently performed
poorly for \mice{}, with 5 exceeding the time limit and 43 requiring
more time than with our toolchain.  Only 1 problem above this
threshold ran faster with \mice{} than with ours.  These are the
problems with significant amounts of sharing in the subgraph.  Our
toolchain exploits this sharing by generating and using lemmas for the
shared subgraphs.  \Mice{}, on the other hand, has no mechanism for
reusing results, effectively expanding the graphs into trees.  

Overall, these results indicate that the \mice{} framework has serious
performance limitations, due in part to its inability to efficiently
exploit the sharing of subgraphs.  In addition, the \mice{} proof generator relies
strongly on the means by which the knowledge compiler output was
generated.  For example, it cannot perform an end-to-end verification
of the combination of preprocessing and knowledge compilation.  Other shortcomings include
the lack of formal verification for the framework or the checker, and that the framework can only
validate the unweighted model count.

\section{Extensions}
\label{sect:extensions}

We are hopeful that having checkable proofs for knowledge compilers
will allow them to be used in applications where high levels of trust
are required, and that it will provide a useful tool for developers of
knowledge compilers.  Our current implementation only handles the
outputs of the \dfour{} knowledge compiler, and it
supports only queries that can be computed via ring evaluation.  Here
we discuss ways to extend both capabilities.

\subsection{Validating Arbitrary POGs}
\label{sect:extend:pog}

Extending our proof generator to other
knowledge compilers that generate \decdnnf{}, such as
\progname{Dsharp}~\cite{muise:cai:2012}, requires simply extending
the parser.  Some knowledge compilers, however, generate
representations that cannot be directly encoded into \decdnnf{}.  For
example, the Sentential Decision Diagram representation introduced by
Darwiche~\cite{darwiche:ijcai:2011} can readily be translated into \detdnnf{},
but with the possibility that some sum nodes will not have 
associated decision variables.

Extending our tool to handle arbitrary POGs, including \detdnnf{} as a
subset, could be done with modest effort.  Our monolithic approach can
generate forward implication proofs for this more general form.  Our
method for generating reverse implication proofs currently handles
\detdnnf{} formulas~\cite{darwiche:jair:2002,capelli:sat:2019}, but not formulas with negations.
Extending it to POGs would require marking nodes for both negative and positive polarities.
The proof generator must also generate mutual
exclusion generate proofs for each sum node declaration.  This could be done
with a proof-generating SAT solver.  That is, for child nodes
$\nodeu_0$ and $\nodeu_1$, it would generate a CNF formula $\theta_c$
consisting of the defining clauses for the subgraphs having $\nodeu_0$
and $\nodeu_1$ as roots, and run a SAT solver on
$\simplify{\theta_c}{\{u_0, u_1\}}$, the formula that would be
satisfied by an extended assignment $\eassign$ that assigns value $1$
to both children.  The proof of unsatisfiability can then be
translated into a series of clause additions, adding literals
$\obar{u}_0$ and $\obar{u}_1$ to each proof clause.  The hint for the
final proof step then serves as the hint for the mutual exclusion
proof in the sum declaration.

\subsection{Generalizing to Semirings}
\label{sect:extend:semiring}

The formulation of algebraic model
counting by Kimmig, et al.~\cite{kimmig:jal:2017} is more general than ours.  It allows
the algebraic structure to be a \emph{semiring}.  A commutative
semiring $\semiring$ obeys all properties of a commutative ring, except that the
elements of the set need not have additive inverses.
We can define the {\em semiring evaluation problem} as computing
  \begin{equation}
    \begin{array}{rcl}
    \srep(\phi, w) & = & \sum_{\alpha \in \modelset(\phi)} \;\; \prod_{\lit \in \alpha} w(\ell) \label{eqn:srep}
    \end{array}
  \end{equation}
  where sum $\sum$ is computed according to the semiring addition operation $\radd$ and product $\prod$
  is computed according to the semiring product operation $\rmul$.

  As an example, consider the formulation of the weighted model counting
  computation in Section~\ref{sect:ring}, but using $\max$ as the sum
  operation, rather than addition.  The computation would then yield
  the maximum weight for all satisfying assignments, rather than their
  sum.

Semiring evaluation can be performed via knowledge compilation
by requiring 
that the representation generated by
the compiler be in negation normal form, and that it obey a property known as
\emph{smoothness}~\cite{darwiche:jair:2002,shih:nips:2019}.  Within our formulation,
a partitioned-operation formula is smooth when all arguments to each
sum operation have identical dependency sets.  That is, every sum
operation $\bigvee_{1 \leq i \leq k} \phi_i$ has
$\dependencyset(\phi_i) = \dependencyset(\phi_1)$ for $1 < i \leq k$.  Smoothness can be ensured by adding redundant formulas to
artificially introduce variables.  For example, if subformula $\phi_i$
lacks having variable $x$ in its dependency set, it can be replaced by
$(x \por \obar{x}) \pand \phi_i$.  When a knowledge compiler generates
a representation in negation normal form that is smooth, then a semiring evaluation of the formula
can proceed by first assigning each literal $\lit$ the value $w(\lit)$.
Then the product and sum operations are evaluated in manners analogous to
(\ref{eqn:ring:product}) and
(\ref{eqn:ring:sum}).

Our POG representation can support evaluation of semiring formulas by
imposing the restriction that the POG is in negation normal form and that it is smooth.
Given a smoothed
\decdnnf{} graph generated by a knowledge compiler, our toolchain will
convert this into a smooth POG in negation normal form and verify its
equivalence to the input formula.  Full verification would also require checking that the POG is smooth.  We must also
extend the formal derivation to ensure soundness and to create a formally verified checker.

\section{Concluding Remarks}
\label{sect:future}

This paper demonstrates a method for certifying the equivalence of two
different representations of a Boolean formula: an input formula
represented in conjunctive normal form, and a compiled representation
that can then be used to extract useful information about the formula,
including its weighted and unweighted model counts.  It builds on the
extensive techniques that have been developed for clausal
proof systems, including extended resolution and reverse unit propagation, as well as established tools, such as
proof-generating SAT solvers.

Our experiments demonstrate that our toolchain can already handle
problems nearly at the limits of current knowledge compilers.  Further
engineering and optimization of our proof generator and checker could
improve their performance and capacity substantially.  We also show
that, by using monolithic proof generation, our toolchain can be
agnostic to the means by which the knowledge compiler created a
\decdnnf{} representation of the input formula.  This generality, plus
the fact that our toolchain has been formally verified, provides a
major improvement over previous methods for checking the outputs of
knowledge compilers and model counters.

\section*{Acknowledgments}

Funding for Randal E. Bryant and Marijn J. H. Heule was provided by the National Science Foundation, NSF grant CCF-2108521.
Funding for Wojciech Nawrocki and Jeremy Avigad was provided by the Hoskinson Center for Formal Mathematics at Carnegie Mellon University.

\bibliography{references}

\end{document}